\mathchardef\mhyphen="2D
\title{The Complexity of Conjunctive Queries with Degree 2}
\author{Matthias Lanzinger}
\affiliation{%
  \institution{Department of Computer Science, University of Oxford}
  \country{United Kingdom}
}
\email{matthias.lanzinger@cs.ox.ac.uk}
\keywords{hypergraph, hypergraph dilution, conjunctive query, complexity of reasoning} %
\begin{abstract}
  It is well known that the tractability of conjunctive query answering can be characterised in terms of treewidth when the problem is restricted to queries of bounded arity. We show that a similar characterisation also exists for classes of queries with unbounded arity and degree 2. To do so we introduce hypergraph dilutions as an alternative method to primal graph minors for studying substructures of hypergraphs.  Using dilutions we observe an analogue to the Excluded Grid Theorem for degree 2 hypergraphs. In consequence, we show that that the tractability of conjunctive query answering can be characterised in terms of generalised hypertree width. A similar characterisation is also shown for the corresponding counting problem.  We also generalise our main structural result to arbitrary bounded degree and discuss possible paths towards a characterisation of tractable conjunctive query answering for the bounded degree case.
\end{abstract}
\begin{document}
\fancyhead{}

\maketitle

\section{Introduction}
The complexity of answering conjunctive queries (CQs) has been a classic
topic of study in database theory. CQs make up the core of many common
query languages, such as SQL, SPARQL, or Datalog, and the algorithmic properties of CQs are therefore also critical to query answering in these languages.
Beyond query answering, the complexity of CQs
is of interest throughout theoretical computer science where it is studied extensively under the equivalent frameworks of Constraint
Satisfaction Problems or homomorphisms between relational structures.

When we speak of the complexity of answering CQs, we generally refer to
the decision problem \cqprob, where a CQ $q$ and a database $D$ are
given, and  the task is to decide whether $q$ has a non-empty
set of results when evaluated over the database $D$. In general, \cqprob is \np-complete~\cite{DBLP:conf/stoc/ChandraM77}, but extensive research in the area has yielded large
tractable\footnote{When not stated otherwise we use tractability to mean polynomial-time decidability.}  fragments of the problem by restricting the structure of queries~\cite{DBLP:journals/jcss/GottlobLS02,DBLP:journals/talg/GroheM14}. This
line of study has also produced two important characterisations (in terms of query structure) of tractable CQ
answering. \citet{DBLP:journals/jacm/Grohe07} showed that \cqprob restricted to
bounded arity CQs is tractable exactly for query classes of bounded
\emph{treewidth modulo homomorphism}, i.e., only if there exists some constant $c$ such that every query in the class is equivalent to a query with treewidth at most $c$ (Proposition~\ref{prop:grohehard}). Analogously, \citet{DBLP:journals/jacm/Marx13}
showed that the fixed-parameter tractability of \cqprob
parameterised by the query's hypergraph structure can be characterised
in terms of \emph{submodular width}.

Despite the wide-reaching consequences of these two results, the case of plain
tractability for unbounded arity queries is still not well
understood. While a number of parameters that induce tractable classes
of the problem in the unbounded arity have been identified -- e.g.,
hypertree width~\cite{DBLP:journals/jcss/GottlobLS02} and its
generalisations~\cite{DBLP:journals/jacm/GottlobMS09,DBLP:journals/talg/GroheM14}
-- there is little evidence to suggest whether these parameters
are even close to the limits of tractability, or whether there
exists a \emph{natural} characterisation for the unbounded arity case at all.

What makes the problem challenging is that very little is known of the
hypergraph structure of queries with unbounded hypertree width (or any
related parameters).  Grohe's lower bound critically relies on the
Excluded Grid Theorem
by~\citet{DBLP:journals/jct/RobertsonS86}. Roughly speaking, in this setting the
theorem states that if a query has large treewidth, then its
primal graph will contain a large grid as a graph minor. Intractability of \cqprob can then be shown by reduction of other problems into large enough grids.
However, any minor of the
primal graph lacks crucial information from the query. In particular, it is possible
that large parts of the grid are covered in a single atom and thus the high connectivity of the grid is not reflected in the actual query. In particular, a reduction following Grohe's technique will produce exponentially large relations in such cases and hence not be efficient enough for the hardness results that we are aiming for.

Marx' characterisation in~\cite{DBLP:journals/jacm/Marx13} addresses this issue through the more abstract notion of
\emph{embedding power}. Rather than relying on the existence of
arbitrarily large grid minors, it is shown that in classes of
unbounded submodular width, there  always exist instances with
arbitrarily high embedding power, which in turn allows for ``compact'' embedding of certain other queries. While high embedding power allows
for effective reductions into queries of unbounded arity, it is
not known (nor suspected) that bounded embedding power or submodular
width are sufficient conditions for non-parameterised tractability of \cqprob in the usual setting\footnote{The situation is different when truth-table representation is considered rather than standard ``compact'' representations via lists of tuples. See the discussion of related work on adaptive width below.}.

These observations reveal two important questions in the
search for the limits of tractability for \cqprob when there is no
bound on the arity.
  \emph{
  \begin{enumerate}%
  \item Are there appropriate notions of \emph{forbidden substructures} in hypergraphs of unbounded rank?
  \item Can we relate such forbidden substructures to any common width parameters for hypergraphs?
  \end{enumerate}
  }
\paragraph{Contributions}
In this paper we attempt to answer these questions for hypergraphs with degree 2.
We show that in this setting, large enough generalised hypertree width (\ghw) always implies the existence of certain highly-connected substructures.
This substructure relation, which we call \emph{hypergraph dilution}, is also connected to the complexity of \pcq, the parameterisation of \cqprob by the query. These observations allow us to follow a similar path as Grohe in the proof of the characterisation for bounded arity in~\cite{DBLP:journals/jacm/Grohe07} and obtain a first characterisation result for the complexity of unbounded
arity CQ answering.
\medskip
\begin{center}
  \parbox{0.9\columnwidth}{
    \emph{
      Assume $\wone \neq \fpt$. Let $\mathcal{Q}$ be a class of queries with degree 2 hypergraphs.
      Then $\cqprob(\mathcal{Q})$ is tractable if and only if $\mathcal{Q}$ has bounded semantic generalised hypertree width.
    }
  }
\end{center}
\medskip
  
The main contributions in this paper are summarised as follows.
\begin{enumerate}
\item To capture a type of relevant substructures of hypergraphs, we introduce
  \emph{hypergraph dilutions}  as a possible alternatives to primal graph minors.
 We show that CQ answering over a hypergraph class $\mathcal{M}$ is fpt-reducible to CQ answering over hypergraphs $\mathcal{H}$, if all hypergraphs in $\mathcal{M}$ are dilutions of hypergraphs in $H$.
\item We show an analogue of the Excluded Grid Theorem for degree 2 hypergraphs. In particular, there exists a function $f$ such that for any integer $n>0$, any hypergraph $H$ with $\ghw(H) \geq f(n)$, contains a \emph{jigsaw} hypergraph (the hypergraph dual of a grid) as a hypergraph dilution. This result may also be of independent interest.
\item In consequence, we show that \cqprob over a class of hypergraphs $\mathcal{H}$ is tractable if and only if $\mathcal{H}$ has bounded generalised hypertree width. We extend the result to  classes of queries with bounded semantic generalised hypertree width \cite{DBLP:journals/jacm/BarceloFGP20} and to the corresponding counting problem of counting answers of CQs.
\end{enumerate}
It remains open whether this result can be extended to classes of
arbitrary bounded degree. We propose possible paths to build on the
results presented in this paper to proceed towards this goal. Moreover,
we give a generalisation of the key structural result to the bounded degree case.

\paragraph{Related Work}
To the best of our knowledge, there exists little related previous work on the complexity of \cqprob for unbounded arity or even the structure of hypergraphs of unbounded \emph{rank} (the maximum edge cardinality) beyond the two previously mentioned characterisation results. One important exception is work by~\citet{DBLP:journals/mst/Marx11} which shows that \cqprob is tractable only for classes of bounded adaptive width if the problem is given in truth table encoding (assuming a nonstandard conjecture). Note however that truth-table representation is generally exponentially larger than the standard succinct representation in terms of lists of tuples that we study. 

We study the effect of restricting the query in this paper. This should not be confused with another prominent line of research on tractable fragments arising from restrictions to the structure of the database. There, a full dichotomy theorem is known due to Bulatov and Zhuk~\cite{DBLP:conf/focs/Bulatov17,DBLP:journals/jacm/Zhuk20}. However, the two sides of the problem are completely independent of each other and results for restrictions to the database do not affect the problem discussed here.

It is tempting to ask whether unbounded \ghw also implies \np-hardness of \cqprob in our setting, i.e., whether our main result can be strengthened to a dichotomy. \citet{DBLP:conf/icalp/BodirskyG08} have shown that, in general, no dichotomy for \cqprob exists. That is, there are polynomially constructable classes of CQs for which \cqprob is neither polynomial nor in \np (unless $\ptime=\np$). Moreover, their argument is very flexible and suggests that their result may be extended to hold even under certain structural restrictions to the class of queries (e.g., classes of bounded degree).

\paragraph{Structure}
We continue with preliminary notation and terminology in Section~\ref{sec:prelim}. We introduce hypergraph dilutions and show the fpt-reducibility of CQ answering along hypergraph dilutions in Section~\ref{sec:dilution}. We show the main structural results, and in consequence the complexity lower bounds, for hypergraphs with unbounded \ghw in Section~\ref{sec:forbid}. In Section~\ref{sec:highdeg}, we discuss challenges and possible paths for a characterisation of the bounded degree case. Concluding remarks and directions for further research are discussed in Section~\ref{sec:conc}. Proof details that are skipped in the main body are presented in the appendix.

\section{Preliminaries}
\label{sec:prelim}
For positive integers $n$ we will use $[n]$ as a shorthand for the set $\{1,2,\dots,n\}$. When $X$ is a set of sets
we sometimes write $\bigcup X$ for $\bigcup_{x\in X}x$. We assume the reader to be familiar with standard notions of (parameterised) complexity theory. We refer to~\cite{DBLP:books/daglib/0018514}
and~\cite{DBLP:series/txtcs/FlumG06} for comprehensive overviews of
computational complexity and parameterised complexity, respectively. As usual, we refer to a problem as \emph{tractable} to say that it is in the complexity class \ptime.

\paragraph{Graphs \& Hypergraphs}
A \emph{hypergraph} $H$ is a pair $(V(H), E(H))$ where $V(H)$ is the set of \emph{vertices} and $E(H) \subseteq 2^{V(H)}$
is the set of \emph{(hyper)edges}. We say that an edge $e$ is \emph{incident} to a vertex $v$ if $v \in e$ and refer to the set of all edges incident to $v$ by $I_v$. We treat \emph{graphs} as hypergraphs where every edge has size 2, i.e., 2-uniform hypergraphs.
The \emph{degree} of a vertex $v$ is defined as $\degree(v) := |I_v|$. The degree of a hypergraph is the maximum degree over all its vertices. The \emph{rank} of a hypergraph is $\rank(H):= \max_{e \in E(H)} |e|$. The \emph{primal graph} (or \emph{Gaifman graph}) of a hypergraph $H$ is the graph $G$ with $V(G)=V(H)$ and $\{x,y\} \in E(G)$ if and only if there is some edge in $H$ that contains both $x$ and $y$.

The \emph{dual} $H^d$ of $H$ is the hypergraph with $V(H^d) = E(H)$ and $E(H^d) = \{I_v \mid v \in V(H)\}$.
We say that a hypergraph $H$ is \emph{reduced} if
\begin{enumerate*}
\item every vertex has at least degree 1,
\item $H$ does not contain an empty edge,
\item and no two vertices have the same \emph{vertex type}, i.e., for any two distinct vertices $v,w$, we have $I_v \neq I_w$.
\end{enumerate*}
If a hypergraph is not reduced, we can easily make it reduced by deleting vertices with degree 1, empty edges and all but one vertex for every vertex type. Applying this process to some $H$ yields a \emph{reduced hypergraph} for $H$.
The definition of reduced hypergraphs historically sometimes includes the condition that no two edges are the same. We consider this constraint implicitly always satisfied by our definition of $E(H)$ as a set. Importantly, if $H$ is a reduced hypergraph, then $\left(H^d\right)^d = H$.

A \emph{path} between two distinct vertices $v_0, v_\ell$ in $H$ is a sequence $(v_0, e_0, v_1, e_1,\dots, e_{\ell-1}, v_\ell)$ alternating between vertices $v_i$ and  edges $e_j$ such that $\{v_i, v_{i+1}\} \subseteq e_i$ for all $0 \leq i < \ell$. Furthermore, no edge or vertex occurs twice in a path. 

\emph{Graph minors} will play an important role in this paper. We say that a graph $G$ is a minor of graph $F$ if there exists a function $\mu \colon V(G) \to 2^{V(F)}$  (the \emph{minor map}) such that 
\begin{enumerate}
    \item for every $v\in V(G)$, $\mu(v)$ is connected in $F$,
    \item for any two distinct $v,u \in V(G)$, $\mu(v) \cap \mu(u) = \emptyset$,
    \item and if $v$ and $u$ are adjacent in $G$, then there is an edge in $F$ that connects $\mu(v)$ and $\mu(u)$.
\end{enumerate}
For connected graphs we can assume, w.l.o.g., that a minor map is  \emph{onto}, i.e., $V(F) = \bigcup_{v\in V(G)} \mu(v)$. %
Alternatively, graph minors are also commonly defined constructively in the following way.
An \emph{edge contraction} in a graph $G$ removes an edge $\{v,w\}$ from $G$ and merges the two vertices $v$, $w$ into one new vertex which is adjacent to exactly the edges adjacent to $v$ or $w$, except for the removed $\{v,w\}$.
A graph $G$ is a minor of graph $F$, if $G$ can be reached from $F$ by a sequence of vertex deletions, edge deletions, and edge contractions. 

\paragraph{Width Parameters}
We will be interested in the structure of hypergraphs in the case where certain parameters are large. We follow~\citet{adler2006width} in the following definitions. A tuple $\tdecomp$ is a \emph{tree
  decomposition} of a hypergraph $H$ if $T$ is a tree, every $B_u$
 is a subset of $V(H)$ and the following two conditions are satisfied:
\begin{enumerate*}
\item 
 For every $e \in E(H)$ there
 is a node $u \in T$ s.t. $e \subseteq B_u$, and
\item for every vertex $v \in V(H)$,
  $\{u \in T \mid v \in B_u\}$ is connected in $T$.
\end{enumerate*}
For functions $f\colon 2^{V(H)} \to \mathbb{R}^+$, the
\emph{$f$-width} of a tree decomposition is defined as
$\sup\{f(B_u) \mid u \in T\}$ and the $f$-width of a hypergraph is the
minimal $f$-width over all its tree decompositions.  The
\emph{treewidth} $\tw(H)$ of a hypergraph $H$ is the $w\text{-width}$,
where $w(B)=|B|-1$. An \emph{fractional edge cover} of vertex set  $V' \subseteq V(H)$ is a set of mapping $\gamma : E(H) \to [0,1]$ such that $\sum_{e \in I_v} \gamma(e) \geq 1$ for all $v \in V'$, i.e., $\gamma$ assigns weights to all edges such that every vertex in $V'$ has at least 1 total weight on its incident edges. The weight of a fractional edge cover $\gamma$ is $\sum_{e \in E(H)} \gamma(e)$. The \emph{fractional edge cover number} of $V' \subseteq V(H)$ is the minimum weight of a fractional edge cover of $V'$. An \emph{(integral) edge cover} is a fractional edge cover where every edge is assigned either $0$ or $1$.
Let $\rho$ be the
function associating sets of vertices with their integral edge cover number in
$H$. The \emph{generalised hypertree width} $\ghw(H)$ of $H$ is the
$\rho\text{-width}$. Analogously, one can define
\emph{fractional hypertree width}~\cite{DBLP:journals/talg/GroheM14} as the
$\rho^*$-width where $\rho^*$ is the fractional edge cover
number.
We say that a class of hypergraphs has \emph{bounded} $\ghw$ if there exists a constant $c$ such that for every $H$ in the class, $\ghw(H) \leq c$.
We use the same convention also for other numeric properties of hypergraphs of queries such as degree or treewidth.

The statement of our main result in terms of bounded \ghw may be a source of confusion since
 there exist hypergraph classes with bounded \fhw but unbounded \ghw and bounded \fhw is a sufficient condition for tractability~\cite{DBLP:journals/talg/GroheM14}.
For hypergraphs with bounded degree the two notions are equivalent up to some fixed function, i.e., every class has bounded \fhw if and only if it has bounded \ghw~\cite{JACM21}.
Thus, in the setting considered in this paper we can use the two notions interchangeably.

\paragraph{Conjunctive Queries}

A \emph{conjunctive query} (CQ) $q$ is a function-free conjunction of relational atoms.
Commonly, the definition of CQs also allows for (top-level)
existential quantification of variables. In the context
of this paper, and the decision problem \cqprob as defined below, such quantification is of no consequence and all
results for \cqprob and \pcq hold also with existential
quantification. This is not true for the counting problem where we explicitly consider only \emph{full CQs}, i.e., CQs with no existential quantification. This is discussed further in the
respective Section~\ref{sec:counting}.

A \emph{database}
is a set of ground relational atoms. We say an assignment of $\nu$ of
variables in $q$ to constants is a \emph{solution} of $q$ for $D$ if
every atom in $q$ with variables replaced according to $\nu$ is in
$D$. We denote the set of all solutions of $q$ for $D$ as $q(D)$.
The \emph{arity} of a CQ is the maximal arity of its individual atoms.
If no relation symbol occurs twice in $q$ we say there are no \emph{self-joins}. If $q$ has no self-joins and no repeated variables in any atom we sometimes implicitly treat $q$ as a join-query in relational algebra where the attributes for each relation are simply the lists of variables in the corresponding atoms of $q$.

The hypergraph of $q$ is the hypergraph $H$ such that $V(H)=\vars(q)$ and for every atom $R(x_1,\dots,x_n)$, there exists an edge of the form $\{x_1,\dots,x_n\}$ in $H$ (and no other edges). We transparently refer to properties of the hypergraph of $q$ also as properties as $q$, e.g., by \ghw or degree of $q$ we refer to the \ghw or degree of the hypergraph of $q$.
Throughout this paper we are primarily interested in the following decision problem over some class of CQs $\mathcal{Q}$ known as \emph{Boolean Conjunctive Query Answering}.
\begin{problem}[framed]{$\cqprob(\mathcal{Q})$}
   Instance: & A CQ $q$ in $\mathcal{Q}$ and a database $D$ \\
   Question: & $q(D) \neq \emptyset$?
\end{problem}
We refer to $\cqprob$ parameterised by the hypergraph of the input query $q$ as
$\pcq$. For a hypergraph class $\mathcal{H}$ we write
$\cqprob(\mathcal{H})$ to mean $\cqprob$ over the class of all CQs
whose hypergraph is in $\mathcal{H}$. The same applies to other decision
problems defined over classes of queries.

We say that two CQs $q_1,q_2$ are \emph{equivalent} if $q_1(D) = q_2(D)$ for
every database $D$.  Every CQ $q$ has a minimal (with respect
to the number of atoms) equivalent query which is called the
\emph{core} of $q$, we write $\core(q)$.
For CQ $q$, we will also be interested in the minimal \ghw
over all equivalent  queries. Let $\mathsf{Eq}(q)$ be the equivalence classes of all queries equivalent to $q$. The
\emph{semantic generalised hypertree width} of $q$ ($\sghw(q)$) is $\min\{\ghw(q') \mid q' \in \mathsf{Eq}(q) \}$, i.e., the minimum $\ghw$ in the equivalence class of $q$ . Analogously, we use \emph{semantic treewidth} to refer to the minimum treewidth in the respective equivalence class. Note that semantic width is also commonly referred to as \emph{width modulo homomorphism} in the literature since CQ equivalence coincides with homomorphic equivalence of queries. For full details and related
definitions see~\cite{DBLP:journals/jacm/BarceloFGP20}.

The following two statements for \cqprob will be of particular importance
here. The first is what we informally refer to as Grohe's
characterisation throughout the paper. The second is a straightforward
combination of two standard results of the field, one showing that
$\ghw$ is equivalent to the more restricted notion of \emph{hypertree width} (see~\cite{DBLP:journals/jcss/GottlobLS02}) up to a constant
factor~\cite{DBLP:journals/ejc/AdlerGG07}, and the other showing
tractability of \cqprob under bounded hypertree
width~\cite{DBLP:journals/jcss/GottlobLS02}.
\begin{proposition}[Theorem 1.1, \citet{DBLP:journals/jacm/Grohe07}]
  \label{prop:grohehard}
      \label{prop:grohe}
  Assume $\fpt \neq \wone$. Let $\classQ$ be a recursively enumerable class of bounded arity CQs. The following three statements are equivalent:
  \begin{enumerate}
  \item $\cqprob(\mathcal{Q})$ is tractable;
  \item $\pcq(\mathcal{Q})$ is fixed-parameter tractable;
  \item $\mathcal{Q}$ has bounded semantic treewidth.
  \end{enumerate}
  If either statement is false, then $\pcq(\classQ)$ is \wone-hard.
\end{proposition}
\begin{proposition}[\citet{DBLP:journals/ejc/AdlerGG07,DBLP:journals/jcss/GottlobLS02}]
  \label{prop:hwtrac}
  Let $\mathcal{Q}$ be a class of CQs with bounded \ghw. Then $\cqprob(\mathcal{Q})$ is tractable.
\end{proposition}

\section{Hypergraph Dilutions}
\label{sec:dilution}

In this section, we introduce hypergraph dilutions as a possible
approach to identify relevant substructures of hypergraphs.  As with
graph minors, the goal of this notion is intuitively to induce an
order of structural simplicity in the sense that if $H$ is a
hypergraph dilution of $H'$, then $H$ should be ``simpler'' than
$H'$. The difficulty of course lies in the question of what makes one
hypergraph simpler than another. We do not claim to have an answer to
this question and, moreover, do not propose that there is a single
``correct'' kind of simplicity. Rather, the generality of hypergraphs
suggests that competing notions will be of interest in different
settings. 

In the context of our goal of identifying forbidden substructures for tractable CQ
answering, the desired notion of simplicity is one that captures a
kind of structural abstraction that adheres to a type of monotonicity
of complexity, meaning that \cqprob should not increase in complexity
for simpler, more abstract, structures. Theorem~\ref{thm:hmred} at the
end of this section demonstrates that hypergraph dilutions capture
this high-level idea of structural simplicity and abstraction in a
meaningful way.  In the following section we present further
motivation for the notion, especially for hypergraphs of bounded
degree.

\begin{definition}
  \label{def:dilut}
  For hypergraph $H'$, we say that $H$ is a \emph{hypergraph dilution} of $H'$ if it is isomorphic to a hypergraph that can be reached from $H'$ by a sequence of the following operations:
  \begin{enumerate}
  \item deleting a vertex (from the vertex set and all edges),
  \item deleting an edge that is a proper subset of another edge,
  \item \emph{merging} on $v$: replacing all of the incident edges $I_v$ of vertex $v$, by a new edge $\left(\bigcup I_v\right) \setminus \{v\}$.
  \end{enumerate}
  We also say that $H'$ \emph{dilutes to} $H$ and refer to the associated sequence of operations as a \emph{dilution sequence} from $H'$ to $H$.
\end{definition}

Importantly, hypergraph dilutions do not allow deletion of arbitrary edges. This is
motivated by our interest in the complexity of CQs. Hypergraph
parameters that induce tractable CQ answering usually generalise the
notion of hypergraph $\alpha$-acyclicity~\cite{DBLP:journals/jacm/Fagin83}. An important
observation there is that if there is some complex substructure (say a
clique $C_n$) that is fully contained in a single separate hyperedge,
then the complex interactions of the substructure $C_n$ can roughly
speaking be ignored when solving the associated query. Hence,
removing arbitrary edges can ``activate'' arbitrarily complex
subproblems.

Thus, deleting an edge $e$ is only possible by deleting vertices such
that $e$ becomes a subedge of another (or equal, thus implicitly
disappearing in the other edge). One special case where such deletion
can be convenient is in hypergraphs that are not connected. Having
multiple connected components is technically inconvenient and of
little algorithmic importance -- each component is essentially an
independent instance -- and it is common to assume connected
instances. In the study of hypergraph dilutions this assumption is not
necessary as we can always delete superfluous maximally connected
components by deleting all vertices, leaving only a single empty edge,
which is naturally a proper subset of any other edge.

The following observations on hypergraph dilutions are important in our further studies.
The first two statements of Lemma~\ref{lem:dilut} are straightforward to verify but of technical importance. In particular, the second statement also implies that every hypergraph has only a finite number of dilutions. The third statement is less simple.  Deleting a vertex can
possibly reduce \ghw while deleting (or adding) a subedge cannot
change the width at all. However, the effect of the merging operation
of hypergraph dilutions is less clear since a new large edge is
introduced, forcing vertices to occur in a bag of a decomposition for $H$ that may not occur together in any optimal
decomposition of $H'$. A proof of the third statement is given in the appendix.
\begin{lemma}
  \label{lem:dilut}
  For hypergraphs $H$ and $H'$ such that  $H'$ dilutes to $H$, the following statements hold:
  \begin{enumerate}
  \item $\degree(H) \leq \degree(H')$;
  \item $|V(H)|+|E(H)| < |V(H')|+|E(H')|$;
  \item $\ghw(H) \leq \ghw(H')$.
  \end{enumerate}
\end{lemma}

Our definition of hypergraph dilutions is of course inspired by graph
minors. Previously, \citet{DBLP:journals/tcs/AdlerGK12} introduced the
notion of \emph{hypergraph minors} as an analogue of graph minors for
hypergraphs. There are some important parallels and differences
between hypergraph minors and hypergraph dilutions that merit
discussion. An important concept in hypergraph minors is the contraction of (the primal edge between) two vertices. Informally, contracting two vertices $x,y$ means to replace them by a new vertex $v_{x,y}$ in the vertex set and in all edges that contain either $x$ or $y$. 
\begin{definition}[\citet{DBLP:journals/tcs/AdlerGK12}]
  \label{def:hm}
  For hypergraph $H'$, we say that $H$ is a \emph{hypergraph minor} of $H'$ if $H$ can be obtained  from $H'$ by a sequence of the following operations:
  \begin{enumerate}
  \item deleting a vertex,
  \item deleting an edge that is a proper subset of another edge,
  \item contraction of two vertices that are contained in a common hyperedge,
  \item or adding a hyperedge $e$, if the vertices of $e$ already induce a clique in the primal graph before adding $e$.
  \end{enumerate}
\end{definition}

The main difference between hypergraph minors and dilutions is the difference between contractions and mergings. Figure~\ref{fig:minorex} provides a small illustration of this difference (the merging is on vertex $y$). Not only is the operation different but this simple example already illustrates how dilutions cannot be simulated by hypergraph minors or vice versa. In particular, the contraction in the example creates a vertex with degree 4, increasing the degree of the original graph. Since dilutions can not increase the degree the result of the contraction can not be a dilution of $H$. On the other hand, the merging creates an edge with 4 vertices. The only way this can be achieved using hypergraph minor operations is through adding a new edge over an existing 4-clique in the primal graph. However, there is no way to form a 4-clique in the primal graph, even with contraction.
Furthermore, the last operation in Definition~\ref{def:hm} would be problematic for our reduction in Theorem~\ref{thm:hmred}.
For this reason, we consider only hypergraph dilutions for our structural results in later sections. It remains open whether similar results can be obtained for hypergraph minors. Note also that, in a sense, the contraction operation of hypergraph minors is a dual operation to the edge merging in dilutions. This relationship to contractions in the dual will play an important role in later sections.
\begin{figure}[t]
  \centering
  \includegraphics[width=0.95\columnwidth]{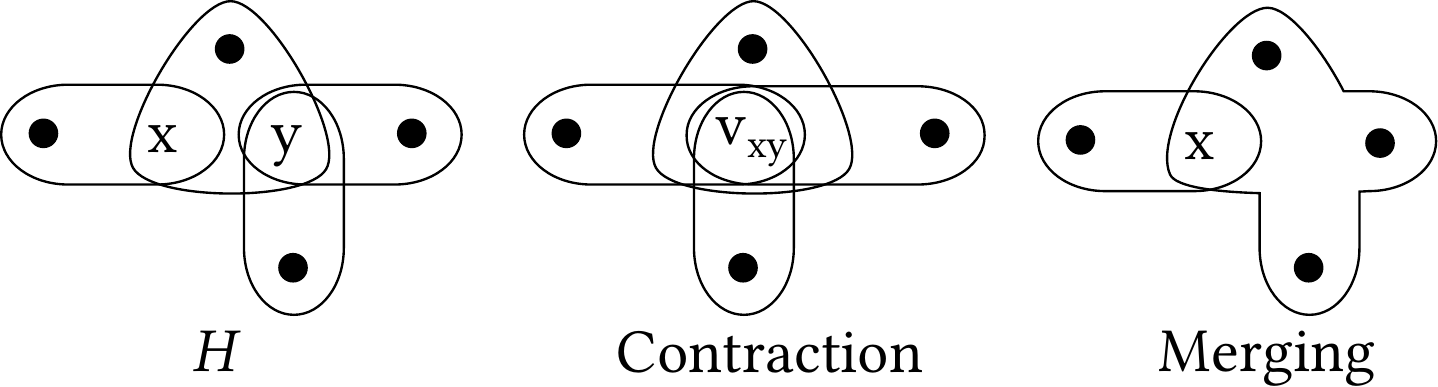}
  \caption{Example: Contraction vs Merging}
  \Description{A visual example of contraction and merging operations on a hypergraph with four edges: $\{a,x\}, \{b,x,y\}, \{y,w\}, \{y,u\}$. The merging is performed on vertex $y$, the contraction contracts vertices $x$ and $y$ into a new vertex $v_{x,y}$}
  \label{fig:minorex}
\end{figure}

Our main results ultimately hinge on two observations. The first is
that for hypergraph classes $\mathcal{H}$ and $\mathcal{M}$, if every hypergraph of $\mathcal{M}$ is a dilution of a hypergraph in $H$, then we can reduce from $\pcq(\mathcal{M})$ to $\pcq(\mathcal{H})$, formally stated in
Theorem~\ref{thm:hmred} below. The second key observation is that
under bounded degree, high \ghw guarantees the existence of certain
dilutions. In combination, these two observations will then yield the lower bounds for our
main results.

 \begin{theorem}
   \label{thm:hmred}
  Let $\mathcal{H}$ be a recursively enumerable class of hypergraphs and let $\mathcal{M}$ be a class such that any member is a hypergraph dilution of a hypergraph in $\mathcal{H}$.
  Then $\pcq(\mathcal{M})$ is fpt-reducible to $\pcq(\mathcal{H})$.
\end{theorem}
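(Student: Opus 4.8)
The plan is to reduce $\pcq(\mathcal{M})$ to $\pcq(\mathcal{H})$ by showing that a CQ whose hypergraph is a dilution $H$ of some $H' \in \mathcal{H}$ can be simulated — as far as solution-existence is concerned — by a CQ over $H'$ together with a suitably modified database. The reduction should be compositional along a dilution sequence: it suffices to handle a single dilution step (vertex deletion, deletion of a subedge, merging on a vertex) and argue that the composition of polynomially many such steps is still an fpt-reduction, with the parameter (the hypergraph) mapped to the fixed $H'$ witnessing membership of $H$ in the class of dilutions of $\mathcal{H}$. Since $\mathcal{H}$ is recursively enumerable and each $M\in\mathcal{M}$ is a dilution of some member of $\mathcal{H}$, given $M$ we can enumerate $\mathcal{H}$ and search for a hypergraph $H'$ together with a dilution sequence from $H'$ to (an isomorphic copy of) $M$; this search depends only on $M$, hence is allowed in an fpt-reduction, and yields both the target hypergraph and the sequence of steps to simulate.

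First I would fix the single-step simulations. For \textbf{vertex deletion}: if $H$ is obtained from $H''$ by deleting a vertex $v$, a query $q$ with hypergraph $H$ is turned into a query $q''$ with hypergraph $H''$ by reintroducing $v$ as a fresh variable into exactly the atoms that lost it, and padding the corresponding relations in the database with an arbitrary new constant in the reintroduced column (a Cartesian-product-style extension). Then $q(D)\neq\emptyset$ iff $q''(D'')\neq\emptyset$. For \textbf{subedge deletion}: if $H$ is obtained from $H''$ by deleting an edge $e$ that is a proper subset of another edge $f$, then $q$ is missing the atom on $e$; we add it back as a projection-style atom whose relation is the projection of the $f$-relation onto $e$'s columns — since $e\subseteq f$, this atom is implied and solution-sets are unchanged. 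The \textbf{merging} step is the delicate one and I expect it to be the main obstacle: $H$ has a single edge $e_v = (\bigcup E_v)\setminus\{v\}$ in place of all edges incident to $v$ in $H''$. Going backwards, from a query $q$ over $H$ we must produce a query $q''$ over $H''$ with a database $D''$ such that satisfiability is preserved. The idea is to introduce $v$ as a fresh variable, split the single atom $R(e_v)$ into one atom per original edge in $E_v$, and encode the original relation $R$ into the new relations so that a tuple of $R$ corresponds to a consistent choice in the product of the new relations — i.e. all new relations share, in the $v$-column, a tuple-identifier of the original $R$-tuple, and in their remaining columns the appropriate restriction of that tuple. Concretely, take the active domain for $v$ to be the set of tuples of $R$ (or an index set thereof), and for each original edge $e_i\in E_v$ let the new relation contain, for every $R$-tuple $t$, the tuple whose $v$-entry is $t$ (its identifier) and whose other entries are $t$ restricted to $e_i\setminus\{v\}$. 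Any solution of $q''$ forces all these atoms to agree on $v$, hence picks a single original $R$-tuple, recovering a solution of $q$; conversely an $R$-tuple yields a solution of $q''$.

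The remaining work is bookkeeping: argue that composing these transformations over a dilution sequence of length $m$ blows up the database by at most a polynomial factor with exponent depending only on the rank/degree — which are bounded because the hypergraphs in question are dilutions of a \emph{fixed} $H'$, so $m$, the rank, and the degree are all constants for a given target hypergraph and only the data scales — and that the whole map is computable in the required fpt time. One subtlety to check carefully is that merging can be applied to a vertex whose incident edges overlap with edges not in $E_v$, and that identifying $v$'s domain with $R$'s tuple set interacts correctly with self-joins and with variables shared between $e_v$ and other edges of $q$; handling shared variables means the new relations must carry those shared columns faithfully, not just the identifier, which the construction above does. I would also note that Theorem~\ref{thm:ghwdilut} is not needed for this reduction but is the companion fact ensuring the reduction lands inside a class of no-larger width; here it suffices that the map sends the parameter $H$ to the constant $H'$, so the reduction is trivially parameter-preserving in the fpt sense.
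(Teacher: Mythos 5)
Your proposal is correct and follows essentially the same route as the paper: enumerate $\mathcal{H}$ to find the source hypergraph and dilution sequence (allowed since this depends only on the parameter), then reverse the sequence step by step, padding deleted vertices with a fresh constant, restoring deleted subedges via projections, and undoing a merging by splitting the merged atom into one atom per original edge linked through a tuple-identifier column for the reintroduced vertex. The self-join subtlety you flag is handled in the paper exactly as you would expect, by first renaming duplicate relation symbols to make the query self-join free, and the size analysis likewise reduces to a per-step blow-up factor bounded in terms of the degree, which never increases along a dilution sequence.
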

\begin{proof}[Proof Idea]
  For some instance $q, D_q$ with hypergraph $M_q$, we find by enumeration of $\mathcal{H}$ a hypergraph $H$ that dilutes to $M_q$ and the corresponding dilution sequence $W=(w_1,\dots,w_\ell)$.
  For each dilution operation $w_i$ -- 
  that produces hypergraph $H_{i}$ from $H_{i-1}$ -- we can show how the query
  $q_{i-1}$ and database $D_{i-1}$ for $H_{i-1}$ can be transformed
  into an equivalent instance $q_i,D_i$ for $H_{i}$ with $\pi_{\vars(q_{i-1})}(q_i(D_i)) = q_{i-1}(D_{i-1})$, where $\pi_{\vars(q_{i-1})}$ is the projection of solutions to the variables of $q_{i-1}$. Thus by traversing $W$ in reverse, we arrive at an instance $p, D_p$ with hypergraph $H_0 = H$ such $\pi_{\vars(p)}(D_p)=q(D_q)$.
  Intuitively, this can be done by introducing keys in the database for the new positions introduced when reversing a merging on a vertex $v$, and by extending all tuples by the same constant to reverse the deletion of a vertex.
  
  For each operation, only linear time in size of the (step $i$) instance is required, and the total size of query and database increases at most in proportion to $\degree(H)$ in each step. Hence, we observe \(\norm{D_p} = O(\degree(H))^\ell \norm{D_q}\) and analogous time bounds for the reduction, where $H$ and $\ell$ both depend only on the parameter $M_q$.
\end{proof}

It may seem natural to extend Definition~\ref{def:dilut} to CQs and
consider reductions from classes of CQ dilutions instead of
operating on hypergraph level. However, it is not clear how the
operations from Definition~\ref{def:dilut} should be adapted to operate
directly on queries. Consider the following example query
$R(x,y,z) \land R(x,u,v) \land S(u,z)$ and consider the case analogous to
deleting vertex $v$ in the corresponding hypergraph. The atom
$R(x,y,z)$ should not be changed but $R(x,u,v)$ would have to become a
$R'(x,u)$ where $R'$ is necessarily a new relation symbol since it has
different arity than $R$. This change in relation symbol removes the
implicit equality between variables $u$ and $y$. It is unclear how the
reduction in Theorem~\ref{thm:hmred} can remain polynomial in the size
of $D$ if such situations occurred. Similar issues can arise when two
edges in the hypergraph are merged into one. Note however that these
problems only arise in the presence of self-joins and that
Theorem~\ref{thm:hmred} can be adapted to hold for classes of
self-join free queries. In Section~\ref{sec:semantic} we discuss how we can still derive our lower bounds for classes of queries through combination with previous results relating the complexity of all queries over a class of hypergraphs to specific classes of queries.

The complexity of deciding hypergraph dilutions is of little consequence to the contents of this paper.
As the complexity may be of independent interest we state it here. An argument is given in the appendix. %
\begin{theorem}
  \label{thm:npdilt}
  It is \np-complete to decide for input hypergraphs $H$ and $H'$, whether $H$ is a hypergraph dilution of $H'$.
\end{theorem}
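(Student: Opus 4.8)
The plan is to prove membership in \np\ via a short certificate, and \np-hardness by a reduction from \textsc{Clique}.

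For membership, the obvious certificate is a dilution sequence from $H'$ reaching some hypergraph $H''$ together with an isomorphism $H'' \cong H$; verification is clearly polynomial once we know the sequence is short. To bound its length I would use the potential $\Phi(K) := |V(K)| + |E(K)| + \sum_{e \in E(K)} |e|$. A quick case analysis on Definition~\ref{def:dilut} shows that deleting a vertex, deleting a subedge, and merging on a vertex of degree at least $1$ all strictly decrease $\Phi$: for a merging, the new edge $(\bigcup E_v)\setminus\{v\}$ has size $|\bigcup E_v| - 1 < \sum_{e\in E_v}|e|$, while $|E|$ cannot grow. The sole exception is merging on an isolated vertex, which only inserts an empty edge; but an empty edge is inert—it is incident to no vertex and can never serve as the larger edge in a subedge deletion—so in a shortest witnessing sequence such a step occurs at most once (a second occurrence, or an ``insert--then--later--delete'' of the empty edge, can be spliced out). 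Hence a shortest witnessing sequence has length at most $\Phi(H')+1 = O(\norm{H'})$, which puts the problem in \np.

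For hardness I would reduce from \textsc{Clique}. Given a graph $G$ and an integer $k \ge 3$ with $|V(G)| \ge k$ (the remaining cases being trivial), let $H'$ be the hypergraph with vertex set $V(G)$ and edge set $E(G) \cup \{V(G)\}$, i.e.\ $G$ together with one ``universal'' hyperedge, and let $H := (\,[k],\ \binom{[k]}{2} \cup \{[k]\}\,)$, i.e.\ the complete graph $K_k$ together with its universal hyperedge. Both are constructed in polynomial time. The claim is that $G$ has a $k$-clique if and only if $H'$ dilutes to $H$. For the forward direction, given a $k$-clique $S$ one deletes all vertices in $V(G)\setminus S$; since $S$ is a clique the surviving hypergraph contains the universal edge $S$ and all $\binom{S}{2}$ pairs inside it, and every other surviving edge (the singletons coming from $G$-edges with exactly one endpoint in $S$, and possibly an empty edge) is a proper subedge of $S$ and can be deleted, leaving a hypergraph isomorphic to $H$.

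The converse is where the real work lies, and controlling the merging operation is the step I expect to be the main obstacle. The key structural claim to establish is an invariant: the current descendant of the universal edge always contains every positive-degree vertex, hence is the unique inclusion-maximal edge (so it is never a proper subedge and is never deletable as such); moreover, since all edges are contained in it, a merging on a vertex $w$ of this edge simply replaces it by itself minus $w$ while leaving $w$ isolated. Because $H$ has no isolated vertices and no edge of size $2<k$ lying below its universal edge, in any dilution sequence ending at a copy of $H$ the universal edge can never be shrunk below size $k$, and in particular no merging can ever produce a size-$2$ edge. Consequently every size-$2$ edge of the final hypergraph must descend from an edge of $G$ joining two surviving vertices. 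As the final hypergraph is isomorphic to $H$ and thus carries all $\binom{k}{2}$ pairs of its $k$ vertices as edges, those $k$ surviving vertices are pairwise adjacent in $G$, i.e.\ they form a $k$-clique. Once this control of the merging operation is nailed down, both directions of the equivalence are routine, and together with the \np\ membership above this establishes \np-completeness.
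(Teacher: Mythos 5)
Your proposal is correct, and the hardness half takes a genuinely different route from the paper. The paper reduces from graph minor containment (known to be \np-complete) and gets the equivalence essentially for free from the duality lemmas it has already established for the main results: $G$ is a minor of $H^d$ if and only if $G^d$ is a dilution of $H$, so one sets $H=F^d$ and is done; a side benefit is that this shows \np-hardness already for inputs $H'$ of degree 2. Your reduction from \textsc{Clique} is self-contained and avoids the duality machinery entirely, at the price of the invariant argument controlling mergings; that invariant (the descendant of the universal edge is always the unique maximal edge containing every positive-degree vertex, so mergings only shrink it and never manufacture small edges, and it can never drop below size $k$ since it only shrinks and must end at size $k$) is exactly the right lever and your argument for it goes through. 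Note, though, that your gadget $H'$ has a vertex of unbounded degree, so unlike the paper's reduction it does not localise the hardness to degree-2 instances. On membership, both proofs rest on polynomial-length certificates, but your potential $\Phi=|V|+|E|+\sum_{e}|e|$ is actually the more careful argument: the paper's claim that each of $|V|$ and $|E|$ is monotone with at least one strict decrease fails per step for a merging on a degree-1 vertex (where both are unchanged), whereas your $\Phi$ strictly decreases there, with the isolated-vertex merging correctly isolated as the single benign exception.
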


It is often technically convenient to consider the analogue of reduced hypergraphs for CQs. That is, we want to assume that no variables occur only in one atom, no atom's variables are a subset of some other atom's variables, and so on. These assumptions on CQs are usually motivated by the fact that they have no significant effect on the upper bounds of the problem and can be avoided via straightforward preprocessing. In conjunction with Theorem~\ref{thm:hmred}, the complexity implications of simplifying CQs in this way can be seen via the following Lemma~\ref{lem:redhyper}, which will also be of technical importance in the following section.

\begin{lemma}
  \label{lem:redhyper}
  Let $H$ be a reduced hypergraph for $H'$. Then $H'$ dilutes to $H$, and a corresponding dilution sequence can be computed in polynomial time.
\end{lemma}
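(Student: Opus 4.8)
## Proof Plan

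The plan is to show that the three "reducing" operations that turn $H'$ into a reduced hypergraph for it — removing degree-1 vertices, removing empty edges, and collapsing a vertex type down to a single representative — can each be realized by the dilution operations of Definition~\ref{def:dilut}, and that the whole reduction can be carried out in polynomial time. Recall from the preliminaries that a reduced hypergraph for $H'$ is obtained by (a) deleting every vertex of degree 1, (b) deleting empty edges, and (c) keeping only one vertex per vertex type. I would handle each of these in turn, being careful that applying one step does not undo an earlier one and that the operations available to dilutions suffice.

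First, deleting a vertex $v$ of degree $1$: this is almost immediate. Vertex deletion is operation~(1) of Definition~\ref{def:dilut}, so after deleting $v$ the unique edge $e$ that contained it becomes $e \setminus \{v\}$; if this new edge happens to be a subset of another edge it can then be removed by operation~(2), but this is optional — the point is only that $v$ is gone. Second, empty edges: an empty edge is a proper subset of any nonempty edge, so if $H'$ has at least one nonempty edge we delete the empty edge by operation~(2); the degenerate case where $H'$ has no nonempty edge at all needs a separate (trivial) remark. The more interesting step is (c), equalizing vertex types. Suppose $v$ and $w$ have the same incidence set, $I_v = I_w$. I claim a single merging on $v$ produces a hypergraph isomorphic to the one obtained by deleting $w$: the merging replaces all edges in $I_v$ by the single edge $\left(\bigcup I_v\right)\setminus\{v\}$, which still contains $w$; meanwhile every other vertex that lay in some $e \in I_v$ now lies in this new merged edge, and the net effect — after also deleting any resulting subedges — is the same as simply identifying $v$ with $w$, i.e.\ as deleting $v$ from the vertex set and from every edge of $I_v = I_w$. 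I would spell out this isomorphism carefully, as it is the one place where the argument is not completely mechanical.

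With these three reductions available as dilution operations, the dilution sequence is built by iterating: repeatedly pick a degree-1 vertex and delete it, then an empty edge and delete it, then a pair of equal-type vertices and merge; interleave subedge deletions as needed to keep edges distinct. Each elementary operation strictly decreases $|V(H')| + |E(H')|$ or keeps $|V|$ fixed while decreasing the number of distinct vertex types, so the process terminates after polynomially many steps, and at termination none of the three "defects" of a non-reduced hypergraph remains — that is, we have arrived at a reduced hypergraph for $H'$, which is unique up to isomorphism. Each step is trivially computable in polynomial time (scanning for a degree-1 vertex, an empty edge, or a pair of vertices with equal incidence sets), so the whole dilution sequence is computed in polynomial time.

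The main obstacle I anticipate is the vertex-type case: one must check that merging on $v$ together with subedge deletions really does coincide, up to isomorphism, with deleting $w$, and in particular that no unintended collateral effect on other edges or vertices occurs — e.g.\ that edges not in $I_v$ are untouched and that no two formerly distinct edges are accidentally identified in a way that changes the hypergraph beyond what deleting $w$ would do. Everything else is bookkeeping. One should also be mindful of edge cases (a hypergraph with no nonempty edges, or where every edge is empty) so that the "empty edge" deletion is always justified, but these are easily dispatched with a sentence.
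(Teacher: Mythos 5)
The paper does not actually spell out a proof of this lemma (it is treated as immediate), so I am judging your argument on its own terms. Your handling of degree-one/isolated vertices and of empty edges is fine: both are direct applications of operations~(1) and~(2) of Definition~\ref{def:dilut}, and your termination and polynomial-time bookkeeping is correct. The problem is exactly at the place you flagged as delicate, the vertex-type step, and there your claim is false. Merging on $v$ replaces \emph{all} edges of $I_v$ by their union $\bigl(\bigcup I_v\bigr)\setminus\{v\}$, whereas deleting $w$ (with $I_w=I_v$) keeps each edge $e\in I_v$ as a separate edge $e\setminus\{w\}$. These are not isomorphic in general, and no subsequent subedge deletions can recover the individual edges once the merging has fused them. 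Concretely, take $V(H')=\{v,w,a,b\}$ and $E(H')=\{\{v,w,a\},\{v,w,b\}\}$, so that $I_v=I_w$. The reduced hypergraph is obtained by deleting $w$ and has the two edges $\{v,a\}$ and $\{v,b\}$ (a path); merging on $v$ instead produces the single edge $\{w,a,b\}$. So your dilution sequence does not terminate at the reduced hypergraph.

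The repair is simpler than what you attempted: ``keeping only one vertex per vertex type'' means \emph{deleting} the redundant vertices $w$ with $I_w=I_v$, and vertex deletion is already dilution operation~(1). Since $v$ and $w$ lie in exactly the same edges, deleting $w$ removes it from $V(H')$ and from every edge of $I_v$ while leaving $v$ as the surviving representative, which is precisely the reduction step. Thus no merging is needed anywhere in this lemma; the entire passage from $H'$ to its reduced hypergraph is a sequence of operations~(1) and~(2) (the latter only for empty edges, which are proper subsets of any nonempty edge), applied iteratively because deleting vertices may cause edges to coincide and thereby lower other vertices' degrees. With that substitution, the rest of your argument --- the iteration, the termination measure, and the polynomial-time computability of the sequence --- goes through as written.
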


\section{Forbidden Dilutions for Degree 2 CQs}
\label{sec:forbid}
  \begin{figure*}[t]
   \centering
   \includegraphics[width=0.95\textwidth]{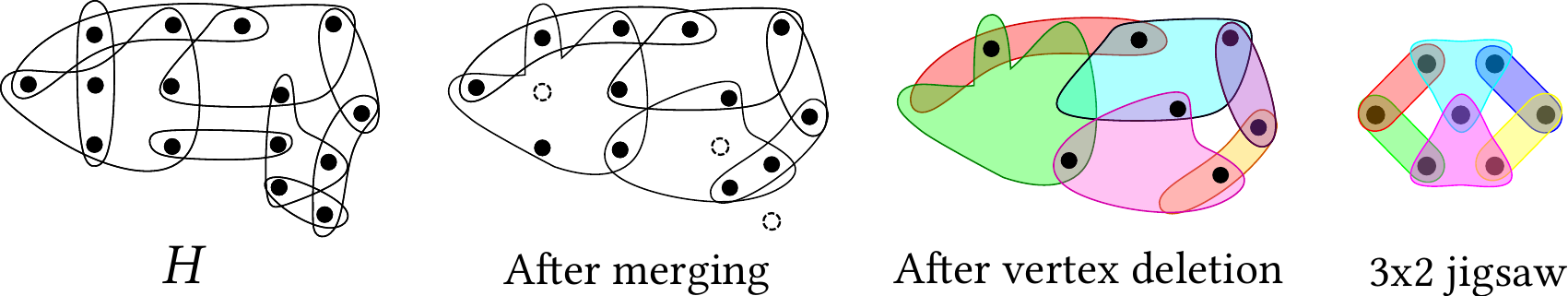}
   \caption{Example Dilution from $H$ to the $3\times{}2$-jigsaw.}
   \Description{An example dilution of a complex hypergraph to a jigsaw.}
   \label{fig:exmesh}
 \end{figure*}

\begin{figure}[t]
  \centering
  \includegraphics[width=0.35\columnwidth]{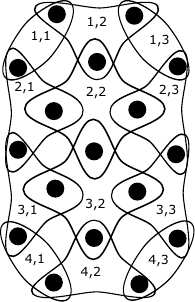}
  \caption{The $3\times{}4$-jigsaw hypergraph}
  \Description{A visual example to of a jigsaw hypergraph to support Definition~\ref{def:jigsaw}.}
  \label{fig:exjig}
\end{figure}

In this section we show that degree 2 hypergraphs with high \ghw always dilute to certain simple but highly connected structures. In particular, we obtain an analogue to the Excluded Grid Theorem for degree 2 hypergraphs. We show that \pcq over these contained structures is hard and thus putting everything together yields the base version of our main result.
\begin{theorem}
  \label{thm:de2}
  \label{thm:deg2}
  Assume $\fpt \neq \wone$. Let $\mathcal{H}$ be a recursively enumerable class of hypergraphs with degree 2. The following three statements are equivalent:
  \begin{enumerate}
  \item $\cqprob(\mathcal{H})$ is tractable; \label{de2trac}
  \item $\pcq(\mathcal{H})$ is fixed-parameter tractable; \label{de2fpt}
  \item $\mathcal{H}$ has bounded generalised hypertree width. \label{de2bound}
  \end{enumerate}
  If either statement is false, then $\pcq(\classH)$ is \wone-hard.
\end{theorem}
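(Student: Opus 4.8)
The plan is to close the cycle $(3)\Rightarrow(1)\Rightarrow(2)\Rightarrow(3)$ and to read off the final sentence as a consequence. Two of the implications are essentially free: $(3)\Rightarrow(1)$ is Theorem~\ref{thm:hwtrac} applied to the class of CQs whose hypergraph lies in $\mathcal{H}$, and $(1)\Rightarrow(2)$ holds because any polynomial-time algorithm for $\cqprob(\mathcal{H})$ is in particular a fixed-parameter algorithm for $\pcq(\mathcal{H})$. So the whole content sits in the sharp contrapositive of $(2)\Rightarrow(3)$: \emph{if $\mathcal{H}$ has unbounded $\ghw$, then $\pcq(\mathcal{H})$ is $\wone$-hard}. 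Granting this, the hypothesis $\fpt\neq\wone$ upgrades it to $\neg(3)\Rightarrow\neg(2)$, and since $\ptime\subseteq\fpt$ also to $\neg(3)\Rightarrow\neg(1)$; together with $(3)\Rightarrow(1)\Rightarrow(2)$ this makes the three statements equivalent, and it also proves the last sentence, since the failure of any one of them forces $(3)$ to fail.

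For the hardness claim I would argue in three steps. \emph{Step 1 (structure).} Invoke the degree-$2$ analogue of the Excluded Grid Theorem, the main structural result of this section, which yields a function $f$ such that every hypergraph of $\ghw$ at least $f(n)$ dilutes to the $n\times n$ \emph{jigsaw} $J_n$, the hypergraph dual of the $n\times n$ grid. Since $\mathcal{H}$ has unbounded $\ghw$, for every $n$ we may choose $H_n\in\mathcal{H}$ with $\ghw(H_n)\ge f(n)$, so $H_n$ dilutes to $J_n$; hence every member of $\mathcal{J}=\{J_n\mid n>0\}$ is a dilution of a member of $\mathcal{H}$. (It is precisely the degree-$2$ hypothesis that makes jigsaws, rather than grids, the relevant forbidden substructure here, since dilutions never increase degree.) \emph{Step 2 (reduction).} By Theorem~\ref{thm:hmred}, $\pcq(\mathcal{J})$ is fpt-reducible to $\pcq(\mathcal{H})$; as in Theorem~\ref{thm:hmred} this uses recursive enumerability of the ambient hypergraph class, which is the one place where a little care about $\mathcal{H}$ is required (if $\mathcal{H}$ itself is not assumed r.e., one works with a fixed r.e.\ subclass of unbounded $\ghw$, exactly as in Grohe's theorem). \emph{Step 3 ($\pcq(\mathcal{J})$ is hard).} Every CQ with hypergraph $J_n$ has arity at most $4$, since the $n\times n$ grid has maximum degree $4$ and every grid edge lies in precisely the two hyperedges of its endpoints (so $J_n$ has degree exactly $2$ and rank at most $4$). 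Grohe's characterisation (Theorem~\ref{thm:grohe}) therefore applies to the bounded-arity class of all CQs with hypergraph in $\mathcal{J}$, and it suffices to exhibit within that class a subfamily of unbounded semantic treewidth. Take the self-join-free CQ $q_n$ with hypergraph $J_n$: being self-join-free it is its own core, so its semantic treewidth equals $\tw(J_n)$, and the primal graph of $J_n$ is the line graph $L(G_{n\times n})$ of the grid, so $\tw(J_n)\ge(\tw(G_{n\times n})-1)/2\to\infty$ by the standard bound $\tw(G)\le 2\,\tw(L(G))+1$ together with $\tw(G_{n\times n})=n$. Hence $\pcq(\mathcal{J})$ is not fixed-parameter tractable and therefore $\wone$-hard by Theorem~\ref{thm:grohe}, and the fpt-reduction of Step 2 transfers this to $\pcq(\mathcal{H})$.

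I expect the only genuinely hard ingredient to be the Excluded Grid analogue used in Step 1, but that is a standalone structural statement proved separately. Inside this proof the non-routine points are (a) the recursive-enumerability caveat around Theorem~\ref{thm:hmred}, and (b) the verification that $\tw(J_n)\to\infty$ (equivalently $\ghw(J_n)\to\infty$, the two being within a constant factor at bounded rank), i.e.\ the short treewidth-of-a-line-graph argument. Everything else is bookkeeping; in particular Theorem~\ref{thm:ghwdilut} is not even needed for the hardness direction, though it confirms that the jigsaws genuinely sit below $\mathcal{H}$ in the dilution order, which is the intuition the whole argument rests on.
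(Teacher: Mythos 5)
Your proposal is correct and follows essentially the same route as the paper: the tractability direction via Theorem~\ref{thm:hwtrac}, and the hardness direction by diluting to jigsaws (Theorem~\ref{thm:excjig}), fpt-reducing via Theorem~\ref{thm:hmred}, and applying Grohe's characterisation to the arity-4 self-join-free queries over jigsaws. The only cosmetic difference is how you certify $\tw(J_n)\to\infty$ (via the line-graph bound, where the paper argues $\ghw(J_n)\ge n$ from edge separators and then passes to treewidth), and your remark on recursive enumerability is, if anything, more careful than the paper's own statement.
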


In general, there are tractable classes of \cqprob that have bounded fractional hypertree width but unbounded generalised hypertree width. In this light, the characterisation in terms of generalised hypertree width may seem unintuitive. However, for bounded degree \classH (and actually even more general restrictions) it is known that \classH has bounded \fhw if and only if \classH has bounded \ghw~\cite{JACM21}. Theorem~\ref{thm:de2} can therefore equivalently be stated in terms of fractional hypertree width (or just hypertree width).

\subsection{The Structure of Hypergraphs with Degree 2 and Unbounded Generalised Hypertree Width}

We will show that degree 2 hypergraphs always dilute to the hypergraph dual of a grid graph, which we will call a \emph{jigsaw hypergraph}. 
\begin{definition}[Jigsaw Hypergraphs]
  \label{def:jigsaws}
  An \emph{$\nxm$-jigsaw} is a hypergraph $H$ with edges
  $\{e_{i,j} \mid i,j \in [n]\times [m]\}$ where every vertex has
  degree 2 and $|e_{i,j} \cap e_{i+1,j}|=1$ and $|e_{i,j} \cap e_{i,j+1}|=1$ for
   $i<n$, $j<m$  and no other pair of edges has a non-empty intersection. %
\end{definition}

The $\nxm$-jigsaw is uniquely determined up to
isomorphism. Figure~\ref{fig:exjig} illustrates a $3\times{}4$-jigsaw
hypergraph. We call $\nxm$ the \emph{dimension} of the jigsaw and we
say that a class of jigsaws has unbounded dimension if there is no
constant bound on either parameter. Note that the the $\nxm$-jigsaw
dilutes to the $n\times{}(m-1)$ jigsaw (and analogously in the other
axis).

\begin{example}
  \label{ex:dil}
 Figure~\ref{fig:exmesh} illustrates an example dilution of a hypergraph with degree 2 to a to the $3\times{}2$-jigsaw.
  In the first step in the figure, three merging operations are performed. The vertices which we merge on are drawn as dashed empty circles. In a second step we delete superfluous vertices. The colours of the edges represent the correspondence to edges in the final jigsaw. 
\end{example}

Our first goal in this section will be to show that it is always possible to dilute a degree 2 hypergraph $H$ to an $\nxn$-jigsaw where $n$ depends on $\ghw(H)$.
We will first observe that graph minors and hypergraph dilutions are
tightly connected in degree 2 hypergraphs. From there we then derive our
main structural result (Theorem~\ref{thm:excjig}).

\begin{lemma}
  \label{lem:shmin}
  Let $G$ be
  a connected graph and let $H$ be a degree 2 hypergraph. If $G$ is a
  minor of $H^d$, then $G^d$ is a hypergraph dilution of $H$.
\end{lemma}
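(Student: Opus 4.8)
The plan is to translate a minor map of $G$ into $H^d$ directly into a dilution sequence on $H$, using the duality between contraction in $H^d$ and the merging operation in $H$. Recall that $V(H^d) = E(H)$ and each vertex of $H^d$ corresponds to an edge of $H$; a minor map $\mu \colon V(G) \to 2^{E(H)}$ assigns to each vertex of $G$ a set of edges of $H$ that is connected in $H^d$. First I would observe that we may assume $\mu$ is onto, i.e., $\bigcup_{v\in V(G)} \mu(v) = E(H)$: if $\mu$ is not onto, I delete from $H$ all vertices incident only to edges not in the image of $\mu$, and delete the resulting subedges, which by Theorem~\ref{thm:ghwdilut}-style reasoning is a legitimate start of a dilution sequence and leaves a hypergraph whose dual still has $G$ as a minor via the same (now onto) map. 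Since $H$ has degree 2, every vertex of $H$ lies in at most two edges, so a vertex of $H$ corresponds either to a hyperedge of $H^d$ of size 2 (the "primal edge" between the two vertices $e, e'$ of $H^d$ it connects) or to a vertex of degree 1 in $H^d$.

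The core of the argument is to realise each ``branch set'' $\mu(v)$ as a single merged edge. Fix $v \in V(G)$ with $\mu(v) = \{e_1,\dots,e_k\} \subseteq E(H)$, connected in $H^d$. Connectedness in $H^d$ means there is a set of vertices of $H$ — primal edges of $H^d$ — forming a connected subgraph on $\{e_1,\dots,e_k\}$; equivalently there is a set $S_v \subseteq V(H)$ of degree-2 vertices of $H$ whose incidences glue $e_1,\dots,e_k$ together into one connected piece. I would then perform a sequence of mergings on the vertices in $S_v$ (more precisely, on a spanning-tree subset of them): each merging on a vertex $u$ incident to two of the current edges among those descending from $\mu(v)$ replaces those two edges by their union minus $u$, and iterating this over a spanning tree of the connectivity structure collapses all of $\mu(v)$ into a single edge $f_v := \bigl(\bigcup_{i} e_i\bigr) \setminus S_v$. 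Doing this for every $v \in V(G)$ (the branch sets are pairwise disjoint, so these merging sequences do not interfere) yields a hypergraph $H^\star$ whose edges are exactly $\{f_v \mid v \in V(G)\}$, possibly together with leftover vertices; after deleting any vertex that ended up with degree $<2$ or that is not needed, the remaining hypergraph $H''$ has edge set $\{f_v\}_{v\in V(G)}$ in bijection with $V(G)$.

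It then remains to check that $H'' \cong G^d$, i.e., that $f_v \cap f_u \neq \emptyset$ exactly when $vu \in E(G)$, and that every vertex of $H''$ has degree 2. If $vu \in E(G)$, then the minor condition gives an edge of $H^d$ — a degree-2 vertex $w$ of $H$ — with one foot in $\mu(v)$ and one in $\mu(u)$; this $w$ survives the merging (it was not consumed, lying outside both $S_v$ and $S_u$) and hence $w \in f_v \cap f_u$. Conversely, if $f_v \cap f_u \neq \emptyset$, any common vertex $w$ is, in $H$, a vertex of degree 2 whose two incident original edges lie one in $\mu(v)$ and one in $\mu(u)$ (it cannot lie in both branch sets since those are disjoint), and such a $w$ witnesses an edge between $\mu(v)$ and $\mu(u)$ in $H^d$; since $G$ is the minor and $w$ has degree 2 we need the added care that no extra adjacencies were created — here I would use that $G$ is \emph{connected} and, if necessary, contract/delete to match $G$ exactly rather than a supergraph of $G$. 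Finally, in $G^d$ every vertex (a vertex of $H$, i.e., an edge of $G$) has degree 2 because every edge of a graph has exactly two endpoints; the vertices of $H''$ are exactly the unconsumed degree-2 vertices of $H$ and each sits in precisely the two $f_v$'s corresponding to the endpoints of its edge in $G$.

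The main obstacle I anticipate is bookkeeping the correspondence between ``surviving'' degree-2 vertices of $H$ and edges of $G$ while ensuring the merging operations produce \emph{exactly} $G^d$ and not merely a hypergraph that dilutes further to $G^d$ — in particular handling the case where the onto minor map has branch sets whose internal connectivity in $H^d$ uses more structure than a tree, so that some degree-2 vertices of $H$ must be deleted (not merged) to avoid spurious self-incidences, and verifying these deletions are legitimate dilution steps. A secondary subtlety is making the reduction to an onto minor map precise, since deleting vertices of $H$ changes $H^d$ and one must confirm the restricted map is still a valid minor map with the same image behaviour; but since the statement only claims $G^d$ is \emph{a} dilution of $H$ (not that the dilution sequence is canonical), any extra deletions and subedge removals at the end are harmless, which should keep this manageable.
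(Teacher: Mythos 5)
Your proposal is correct and follows essentially the same route as the paper's proof: dualise the minor map so each branch set becomes a connected set of edges of $H$, merge each such set into a single edge using interior degree-2 gluing vertices (which, by the degree-2 assumption, are necessarily distinct from the inter-branch connector vertices, so the connectors survive), and then delete every vertex other than one designated connector per edge of $G$ to land exactly on $G^d$. The paper phrases the merging via a minimal hitting set disjoint from the connectors rather than a spanning tree, and handles your "extra adjacencies" worry by restricting at the end to the chosen connectors $c_{u,v}$ only, but these are cosmetic differences.
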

\begin{proof}
  We assume that $H$ is a reduced hypergraph. Isolated vertices, empty
  edges and duplicate vertex types do not materially affect
  minor maps from $G$ into $H^d$. By Lemma~\ref{lem:redhyper}, there is always
  a dilution sequence from any hypergraph to its respective reduced
  version. Hence, the assumption can be made without loss of
  generality.

  Let $\phi \colon E(H) \to V(H^d)$ be the bijection from edges in $H$ to their corresponding vertex in the dual, and let 
  $\mu \colon V(G) \to 2^{V(H^d)}$ be a minor map from $G$ onto $H^d$. For every $v \in V(G)$, let $\delta(v) = \phi^{-1}(\mu(v))$ and observe that $\delta(v)$ is a connected set of
  edges in $H$.

  For any two adjacent vertices $u,v$ in $G$, there is an edge in
  $H^d$ that connects $\mu(u)$ and $\mu(v)$. Hence, there is also a
  vertex $c_{u,v}$ that is both in an edge in $\delta(u)$ and an edge in $\delta(v)$.  Since
  $c_{u,v}$ has degree 2, it is therefore connected to only one edge
  in $\delta(u)$. For each $v$ adjacent to $u$ in $G$ fix such a $c_{u,v}$ and let us refer to the set of these fixed vertices for $u$ as $C_u$.
  Let $\tau_u$ be the vertices that are incident only to edges in $\delta(u)$. Observe that either $\delta(u)$ conains one edge, or every edge in $\delta(u)$ is incident to at least one vertex $\tau_u$. Suppose towards a contradiction that $\delta(u)$ consists of more than one edge and and that there is an edge $e \in \delta(u)$ such that all vertices in $e$ are incident to some other edge not in $\delta(u)$. Since all vertices have have degree at most 2, that would imply that $e$ is not incident to any other edge in $\delta(u)$, thus contradicting the connectedness of $\delta(u)$.
  Furthermore, note that $\tau_u$ and $C_u$ are disjoint by definition.

  Let $H_1$ be the hypergraph
  obtained by merging, for every $u\in V(G)$,  all vertices in $\tau_u$. By the above observations, either $\delta(u)$ was already a singleton, or the merging produced a single new merged edge $e_u$ from all of the edges of $\delta(u)$, since every such edge was incident to some vertex in $\tau_u$.  By  construction, $H_1$ is clearly a dilution of $H$. Let
  $C = \bigcup_{u \in V(G)}C_u$  and observe that since
  $\tau_u \cap C_u = \emptyset$ for all $u\in V(G)$, no vertices in a
  $C$ have been removed by the merging process.

  Finally, let $H_2$ be the induced subhypergraph $H_1[C]$, i.e., the
  hypergraph obtained from $H_1$ by deleting all vertices not in $C$.
  Observe that for every edge $u \in E(G^d)$, there is a vertex
  $u \in V(G)$ and exactly one edge $e_u \cap C$ in $H_2$. For every edge $\{u,v\}$ in $G$ (or vertex $g_{u,v}$ in $G^d$), there is a vertex $c_{u,v}$ in $C$ and thus in $H_2$, such that $c_{u,v}$ is contained only in edges $e_u$ and $e_v$. Since this correspondence from edges and vertices of $G^d$ to vertices and edges in $H_2$ is one-to-one and $H_2$ contains only these edges and vertices by construction, the implications hold also in the other direction. Hence, $H_2$ is isomorphic to $G^d$ and a hypergraph dilution of $H$.
\end{proof}

This observed duality of graph minors and dilutions in degree 2 hypergraphs also illustrates a conceptual switch. Intuitively, high treewidth expresses large sets of highly connected vertices, while high \ghw can be seen as a sign of large sets of highly connected edges. See also the discussion accompanying the definition of embedding power in~\cite{DBLP:journals/jacm/Marx13} for further intuition.

\begin{proposition}[\citet{DBLP:journals/jct/RobertsonS86}]
  \label{prop:excgrid}
  There exists a function $f \colon \mathbb{N} \to \mathbb{N}$ with the following property: for every $n\geq 1$,
  every graph $G$ with $\tw(G) > f(n)$ contains an $\nxn$-grid as a minor.
\end{proposition}

As a final piece of the puzzle we observe that high \ghw always implies high treewidth in the dual. This observation has been informally mentioned previously, but we are not aware of any formal statement or proof in the literature. Since it is key to our main theorem we provide our own proof in the appendix.
\begin{lemma}
  \label{lem:dualtw}
  Let $H$ be a reduced hypergraph. Then $\ghw(H) \leq tw(H^d)+1$.
\end{lemma}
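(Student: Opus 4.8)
The plan is to take an optimal tree decomposition of the dual $H^d$ and turn it, bag by bag, into a generalised hypertree decomposition of $H$ of the claimed width. Recall that the vertices of $H^d$ are exactly the edges of $H$, and the edges of $H^d$ are the vertex-types $I_v = \{e \in E(H) \mid v \in e\}$ for $v \in V(H)$ (here reducedness guarantees these are distinct, so $|V(H^d)| = |E(H)|$ and $|E(H^d)| = |V(H)|$). So let $(T, (C_u)_{u \in T})$ be a tree decomposition of $H^d$ of width $\tw(H^d)$; each bag $C_u$ is a set of at most $\tw(H^d)+1$ edges of $H$. For each node $u$ of $T$, define the new bag $B_u := \bigcup C_u \subseteq V(H)$, the set of all vertices of $H$ lying on some edge in $C_u$, and define the edge-cover guess at $u$ to be exactly the set $C_u$ of edges. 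Since $|C_u| \leq \tw(H^d)+1$ and $C_u$ covers $B_u$ by construction, the $\rho$-width of this decomposition is at most $\tw(H^d)+1$, as required — provided $(T, (B_u)_u)$ is in fact a valid tree decomposition of $H$.

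First I would verify the edge-covering condition for tree decompositions of $H$: for every $e \in E(H)$ we need a node $u$ with $e \subseteq B_u$. Since $e$ is a vertex of $H^d$, the tree decomposition of $H^d$ has a node $u$ with $e \in C_u$, and then $e \subseteq \bigcup C_u = B_u$. Next, the connectivity condition: for each $v \in V(H)$, the set $\{u \mid v \in B_u\}$ must be connected in $T$. Observe that $v \in B_u$ iff $v$ lies on some edge in $C_u$ iff $C_u \cap I_v \neq \emptyset$, i.e. iff the bag $C_u$ of the dual decomposition meets the edge $I_v$ of $H^d$. So the set of nodes whose $B$-bag contains $v$ is the union over $e \in I_v$ of the (connected, by the vertex-connectivity property of the dual decomposition) sets $\{u \mid e \in C_u\}$. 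A priori a union of connected subtrees need not be connected, but here all the edges $e \in I_v$ form a single hyperedge of $H^d$, so by the edge-covering property of the dual tree decomposition there is a node $u^*$ with $I_v \subseteq C_{u^*}$; that node lies in every one of the subtrees $\{u \mid e \in C_u\}$ for $e \in I_v$, so these pairwise-intersecting subtrees have nonempty common intersection and their union is connected in $T$. This is the one genuinely non-routine point, and it is exactly where both reducedness and the definition of $H^d$ are used, so I would state it carefully.

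Putting it together: $(T,(B_u)_u)$ is a tree decomposition of $H$, and equipping node $u$ with the edge set $\lambda_u := C_u$ gives a generalised hypertree decomposition since $B_u \subseteq \bigcup \lambda_u$ (in fact equality) and $|\lambda_u| \leq \tw(H^d)+1$; hence $\ghw(H) \leq \tw(H^d)+1$. The main obstacle, as noted, is the connectivity argument for the new bags — seeing that the potential failure mode (a disconnected union of subtrees) cannot occur precisely because $I_v$ is an edge of $H^d$ and the dual decomposition must cover it in a single bag. Everything else is bookkeeping. A remark worth adding is that the construction only needs $H$ reduced insofar as we want $H^{d^d} = H$ and distinct vertex-types; if some vertices share a type the same construction still works for the reduced version of $H$, which has the same $\ghw$.
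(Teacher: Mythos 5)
Your proposal is correct and follows essentially the same route as the paper's proof: take a tree decomposition $(T,(C_u))$ of $H^d$, set $B_u=\bigcup C_u$ and $\lambda_u=C_u$, verify edge coverage directly, and establish connectivity of $\{u\mid v\in B_u\}$ by noting that $I_v$ is a hyperedge of $H^d$ and hence contained in some bag $C_{u^*}$, which makes the subtrees for the edges in $I_v$ pairwise intersect at $u^*$. The paper's argument is identical, including the key observation you correctly flag as the only non-routine step.
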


\begin{theorem}
  \label{thm:excjig}
  There exists a function $f$ with the following property: for every $n\geq 1$,
  every degree 2 hypergraph $H$ with $\ghw(H) > f(n)$ dilutes to the
  $\nxn$-jigsaw.\footnote{The upper bound for $f$ is inherited from the Grid Exclusion Theorem. The best known bound currently is $f(n) = O(n^9 \polylog n)$ due to~\citet{DBLP:journals/jctb/ChuzhoyT21}}
\end{theorem}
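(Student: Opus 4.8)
The plan is to assemble the statement from the three ingredients built up above: the duality between $\ghw(H)$ and $\tw(H^d)$ (Lemma~\ref{lem:dualtw}), the Excluded Grid Theorem for graphs (Theorem~\ref{thm:excgrid}), and the transfer of grid minors in the dual into jigsaw dilutions of $H$ (Lemma~\ref{lem:shmin}). Concretely, let $g$ be the function supplied by Theorem~\ref{thm:excgrid} and set $f(n) := g(n) + 1$.

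First I would reduce to the case that $H$ is a reduced hypergraph. By Lemma~\ref{lem:redhyper}, $H$ dilutes to its reduced version $H_r$, and since ``dilutes to'' is transitive it suffices to show that $H_r$ dilutes to the $\nxn$-jigsaw. Here one checks that reducing a hypergraph affects neither its degree (deletions cannot raise any degree) nor its generalised hypertree width: removing empty edges is immaterial, a degree-$1$ vertex can always be put into a fresh bag covered by its unique incident edge, and a vertex can be placed into exactly those bags that already contain one of its duplicates. Hence $\ghw(H_r) = \ghw(H) > f(n)$ and $H_r$ still has degree at most $2$.

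Next, by Lemma~\ref{lem:dualtw} we obtain $\tw(H_r^d) \geq \ghw(H_r) - 1 > f(n) - 1 = g(n)$ (treewidth and minors of the hypergraph $H_r^d$ being understood via its primal graph, exactly as in Lemmas~\ref{lem:dualtw} and~\ref{lem:shmin}). Applying Theorem~\ref{thm:excgrid} to $H_r^d$ then yields an $\nxn$-grid $G$ as a minor of $H_r^d$. Since $G$ is connected and $H_r$ has degree $2$, Lemma~\ref{lem:shmin} gives that $G^d$ is a hypergraph dilution of $H_r$. Finally, $G^d$ is by definition the $\nxn$-jigsaw, so $H_r$, and therefore $H$, dilutes to the $\nxn$-jigsaw, which is what we wanted. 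The claimed bound on $f$ is immediate from $f = g + 1$ and the known bounds on $g$.

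I do not expect a genuine obstacle at this stage: the structural heavy lifting has already been done, upstream in Robertson--Seymour and, crucially, in Lemma~\ref{lem:shmin}. The only points needing a little care are the reduction to reduced hypergraphs -- specifically, verifying that it \emph{preserves} $\ghw$ rather than merely not increasing it, so that largeness of $\ghw(H)$ transfers to $\ghw(H_r)$ -- and the bookkeeping that makes the additive losses in Lemma~\ref{lem:dualtw} and in the choice of $f$ line up so that the hypothesis of Theorem~\ref{thm:excgrid} is met.
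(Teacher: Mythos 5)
Your proposal is correct and follows essentially the same route as the paper's proof: reduce to the reduced hypergraph via Lemma~\ref{lem:redhyper}, pass to the dual using Lemma~\ref{lem:dualtw}, extract a grid minor via Theorem~\ref{thm:excgrid}, and convert it to a jigsaw dilution with Lemma~\ref{lem:shmin}. Your choice $f = g+1$ is in fact slightly more careful than the paper's bookkeeping, which only derives $\tw(H^d) \geq r(n)$ where the strict inequality $\tw(H^d) > r(n)$ is what Theorem~\ref{thm:excgrid} formally requires.
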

\begin{proof}
  Let $r \colon \mathbb{N} \to \mathbb{N}$ be the function from
  Theorem~\ref{prop:excgrid}. For the function of the statement it suffices to consider $f \colon n \mapsto r(n)+1$. Let $H'$ be a hypergraph with
  $\ghw(H') > f(n)$, let $H$ be the reduced hypergraph for $H'$ and
  recall that $\ghw(H)=\ghw(H')$.  By Lemma~\ref{lem:dualtw}, we have
  that $tw(H^d) > f(n)-1 = r(n)$ and thus $H^d$ contains a $\nxn$-grid $G_n$
  as minor.  By Lemma~\ref{lem:shmin}, $G_n^d$ is a hypergraph
  dilution of $H$ and by Lemma~\ref{lem:redhyper} also of $H'$. By
  definition $G_n^d$ is the $\nxn$-jigsaw  $J_n$ and thus
  $J_n$ is a hypergraph dilution of $H'$.
\end{proof}

\subsection{From Jigsaw Dilutions to Lower Bounds}

It is not difficult to observe that the $\nxn$-jigsaw has $\ghw$ of at
least $n$. This can be seen by observing that since the jigsaw can not be separated by less
than $n$ edges it can not be separated into balanced components (that is, components at most half the size of the original hypergraph) by less than $n$
edges. It is known that such balanced separation of a hypergraph $H$ can always be achieved with $\ghw(H)$ edges~\cite{DBLP:journals/ejc/AdlerGG07} and hence $\ghw$ of the $\nxn$-jigsaw must be at least $n$.
Moreover, from Lemma~\ref{lem:dilut} we can also observe the opposite direction: a hypergraph $H$ has high \ghw if it dilutes to a jigsaw with high dimension (regardless of the degree of $H$).

We are now ready to combine our main structural results with our
reduction for dilutions to derive our lower bound for degree 2 CQ answering.

\begin{theorem}
  \label{thm:mainhard}
  Let $\mathcal{H}$ be a recursively enumerable class of degree 2
  hypergraphs with unbounded $\ghw$. Then $\pcq(\mathcal{H})$ is
  $\wone$-hard under fpt-reductions.
\end{theorem}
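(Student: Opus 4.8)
The plan is to chain together the structural result Theorem~\ref{thm:excjig}, the dilution reduction Theorem~\ref{thm:hmred}, and the hardness direction of Grohe's characterisation Theorem~\ref{thm:grohe}.

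First I would extract an infinite family of jigsaws that are all dilutions of members of $\mathcal{H}$. Let $f$ be the function from Theorem~\ref{thm:excjig}. Since $\mathcal{H}$ has unbounded $\ghw$ and degree $2$, for every $n \geq 1$ there is a hypergraph $H_n \in \mathcal{H}$ with $\ghw(H_n) > f(n)$, and hence $H_n$ dilutes to the $\nxn$-jigsaw $J_n$. Thus $\mathcal{J} := \{J_n \mid n \geq 1\}$ is a class in which every member is a hypergraph dilution of a member of $\mathcal{H}$. As $\mathcal{H}$ is recursively enumerable, Theorem~\ref{thm:hmred} yields an fpt-reduction from $\pcq(\mathcal{J})$ to $\pcq(\mathcal{H})$, so it remains to show that $\pcq(\mathcal{J})$ is $\wone$-hard.

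For that I would invoke Grohe's theorem on the canonical queries of the jigsaws. For each $n$, let $q_n$ be the self-join-free CQ whose hypergraph is $J_n$, with one atom per hyperedge $e_{i,j}$ of $J_n$, each using a fresh relation symbol and the (pairwise distinct) vertices of $e_{i,j}$ as its arguments. Since the relation symbols are pairwise distinct and every atom lists distinct variables, the only endomorphism of $q_n$ is the identity; hence $q_n$ is its own core and its semantic treewidth equals $\tw(q_n)$, i.e.\ the treewidth of the primal graph of $J_n$. That primal graph is precisely the line graph $L(G_n)$ of the $\nxn$-grid $G_n$: the vertices of $J_n$ are the edges of $G_n$, and two of them lie in a common hyperedge of $J_n$ exactly when they share a grid vertex. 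Replacing, in any tree decomposition of $L(G_n)$, each line-graph vertex by the two endpoints of the corresponding grid edge produces a tree decomposition of $G_n$ of at most twice the width, so $\tw(L(G_n)) \geq (\tw(G_n)-1)/2 = (n-1)/2$, which is unbounded. Moreover $\rank(J_n) = 4$, the maximum degree of a grid, so $\{q_n \mid n \geq 1\}$ is a class of bounded-arity CQs of unbounded semantic treewidth. By the hardness direction of Theorem~\ref{thm:grohe}, $\pcq(\{q_n \mid n \geq 1\})$ is $\wone$-hard; since this is just $\pcq(\mathcal{J})$ restricted to a subclass of queries, with the same parameterisation by the query hypergraph, $\pcq(\mathcal{J})$ is $\wone$-hard as well. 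Composing with the reduction of the previous paragraph gives that $\pcq(\mathcal{H})$ is $\wone$-hard.

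I do not expect a genuine obstacle once Theorems~\ref{thm:hmred} and~\ref{thm:excjig} are available; the step requiring the most care is the last one, namely identifying the primal graph of the jigsaw with the line graph of the grid and lower-bounding its treewidth so that Grohe's characterisation applies verbatim, together with the bookkeeping that the queries $q_n$ really are bounded-arity (because grids have maximum degree $4$) and that passing to this subclass of queries does not alter the parameterisation.
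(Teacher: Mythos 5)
Your proposal is correct and follows essentially the same route as the paper: extract unbounded-dimension jigsaws as dilutions via Theorem~\ref{thm:excjig}, reduce from them via Theorem~\ref{thm:hmred}, and apply Grohe's hardness to the canonical self-join-free, repeated-variable-free queries over jigsaws, which are cores of arity~4. The only (harmless) difference is how unbounded semantic treewidth of the jigsaw queries is certified: the paper goes through the observation that $\nxn$-jigsaws have $\ghw$ at least $n$ (no small balanced edge separator), hence unbounded treewidth, whereas you identify the primal graph of $J_n$ with the line graph of the grid and lower-bound its treewidth directly; both work.
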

\begin{proof}
  First we observe that if $\mathcal{J}$ is a recursively enumerable class of jigsaws with unbounded
  dimension, then $\pcq(\mathcal{J})$ is $\wone$-hard.
    From the above discussion $\mathcal{J}$ has unbounded \ghw and thus
  also unbounded treewidth.  Let
  $\mathcal{Q}_{\mathcal{J}}$ be the class of all self-join free
  queries with no repeat variables in any atom and hypergraphs in
  $\mathcal{J}$. $\mathcal{Q}_{\mathcal{J}}$ then has arity 4 and unbounded
  semantic treewidth and $\pcq(\mathcal{Q}_{\mathcal{J}})$ is \wone-hard under fpt-reductions by
  Proposition~\ref{prop:grohehard}. Then, by inclusion so is $\pcq(\mathcal{J})$.

  Let $\mathcal{H}^*$ be the class of all dilutions of
  $\mathcal{H}$. Note that $\mathcal{H}^*$ can still be recursively
  enumerated. By Theorem~\ref{thm:excjig}, and the previous observation that an $\nxn$-jigsaw dilutes to all (modulo isomorphism) jigsaws of lower dimension, $\mathcal{H}^*$ contains
  the class of all jigsaws and thus $\pcq(\mathcal{H}^*)$ is
  \wone-hard by the argument above. Then by Theorem~\ref{thm:hmred} so
  is $\pcq(\mathcal{H})$.
\end{proof}

\begin{proof}[Proof of Theorem~\ref{thm:de2}]
  The implication \ref{de2bound}$\Rightarrow$\ref{de2trac} follows directly
  from Proposition~\ref{prop:hwtrac}. The implication \ref{de2trac}$\Rightarrow$\ref{de2fpt} is immediate.  If $\mathcal{H}$ has unbounded $\ghw$,
  then $\pcq(\mathcal{H})$ is \wone-hard by
  Theorem~\ref{thm:mainhard}. Since we assume that $\fpt \neq \wone$,
  the implication \ref{de2fpt}$\Rightarrow$\ref{de2bound} follows by
  contraposition.
\end{proof}

Theorem~\ref{thm:deg2} also has interesting structural consequences.
According to~\citet{DBLP:journals/jacm/Marx13},
$\pcq(\classH)$ is fixed-parameter tractable if and only if $\classH$
has bounded submodular width (\subw), assuming the Exponential Time
Hypothesis (ETH)~\cite{DBLP:journals/jcss/ImpagliazzoPZ01}. Recall, the ETH
is a stronger assumption than $\fpt \neq \wone$ in the sense that, if the
ETH holds, so does $\fpt \neq \wone$. It holds for any hypergraph $H$ that $\subw(H) \leq \ghw(H)$, but the two complexity results
imply a previously unknown, and somewhat surprising,
equivalence of the two width parameters for degree 2 hypergraphs.

\begin{corollary}
  \label{cor:subw}
  Assume the Exponential Time Hypothesis. Let $\mathcal{H}$ be a recursively enumerable class of degree 2 hypergraphs. Then
  $\mathcal{H}$ has bounded submodular width if and only if it has
  bounded generalised hypertree width.
\end{corollary}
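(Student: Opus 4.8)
The plan is to derive Corollary~\ref{cor:subw} purely by combining the two complexity characterisations already available, using Theorem~\ref{thm:deg2} as the bridge. The forward direction (bounded $\ghw$ implies bounded $\subw$) is immediate from the pointwise inequality $\subw(H) \leq \ghw(H)$, which holds for every hypergraph, so nothing needs to be done there beyond stating it. The substance is the converse: assume $\mathcal{H}$ is a class of degree 2 hypergraphs with bounded submodular width, and conclude it has bounded $\ghw$.

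For the converse, I would argue by contraposition. Suppose $\mathcal{H}$ has unbounded $\ghw$. By Theorem~\ref{thm:deg2} (specifically the contrapositive of \ref{de2fpt}$\Rightarrow$\ref{de2bound}, which needs only $\fpt \neq \wone$, and the ETH is a strictly stronger assumption so this is available), $\pcq(\mathcal{H})$ is not fixed-parameter tractable. On the other hand, Marx's characterisation~\cite{DBLP:journals/jacm/Marx13} states that, under ETH, $\pcq(\mathcal{H})$ is fixed-parameter tractable if and only if $\mathcal{H}$ has bounded submodular width. Hence $\mathcal{H}$ does not have bounded submodular width. This establishes (bounded $\subw$ $\Rightarrow$ bounded $\ghw$) by contraposition, and combined with the trivial forward direction gives the stated equivalence.

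One technical point to check is that Theorem~\ref{thm:deg2} is stated for classes of hypergraphs (via $\cqprob(\mathcal{H})$ and $\pcq(\mathcal{H})$ meaning the problems over all CQs whose hypergraph lies in $\mathcal{H}$), which matches exactly the setting of Marx's theorem, so the two results speak about the same object $\pcq(\mathcal{H})$ and can be composed directly. There is essentially no obstacle here: the corollary is a direct observation that the two independently-proved complexity dichotomies must agree on their tractable boundary for degree 2 hypergraphs, and the only care needed is to note that ETH implies $\fpt \neq \wone$ so that Theorem~\ref{thm:deg2} applies under the hypothesis of the corollary.
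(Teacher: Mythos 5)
Your proposal is correct and follows essentially the same route as the paper: the easy direction from the pointwise inequality $\subw(H)\leq\ghw(H)$, and the converse by playing Theorem~\ref{thm:deg2} off against Marx's characterisation of fixed-parameter tractability under ETH. The paper presents the corollary with exactly this (brief) justification, so there is nothing to add.
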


Finding a constructive argument for Corollary~\ref{cor:subw} is an
interesting open question in the search for further lower bounds
beyond degree 2. We refer to Section~\ref{sec:highdeg} for further
discussion.

\subsection{To Classes of Queries}
\label{sec:semantic}

We can make Theorem~\ref{thm:de2} more fine-grained. Instead of all
queries for a class of hypergraphs we can also consider just classes
of queries as in Proposition~\ref{prop:grohe}.  See
also~\cite{DBLP:conf/ijcai/ChenGLP20} for the respective extension to
Marx' characterisation of fixed-parameter tractability and further
discussion of the differences.

As discussed above, it is not clear how to handle hypergraph dilutions
on a query level. Consequently, it is also difficult to state an
analogue to the reduction in Theorem~\ref{thm:hmred} for classes of
queries.  Instead, we can make use of a more general result by~\citet{DBLP:conf/ijcai/ChenGLP20}
that relates the complexity of CQ answering over classes of
hypergraphs to the complexity of query classes.

\begin{proposition}[\citet{DBLP:conf/ijcai/ChenGLP20}]
  \label{prop:hgred}
  Let $\classQ$ be a class of CQs, let $\core(\classQ)$ be the class of cores of \classQ and let $\classH^{\core(\classQ)}$ be the class of hypergraphs of the queries in $\core(\classQ)$. Then  $\pcq(\classH^{\core(\classQ)})$ is fpt-reducible to $\pcq(\classQ)$.
\end{proposition}

There is some ambiguity in what can be considered a degree 2 CQ. The
hypergraph of a query can have degree 2 even if variables occur in
more than 2 atoms of a query. For example, in the query
$R(x,y) \land S(x,y) \land T(x,z)$, $x$ is in 3 atoms but only in two
edges of the hypergraph since the $R$ and $S$ atoms become the same
edge. The following results hold also for the more expansive reading,
that is, we say that a CQ has degree 2 if its hypergraph has degree 2.

\begin{theorem}
  \label{thm:hardq}
  Assume $\fpt \neq \wone$. Let \classQ be a recursively enumerable
  class of degree 2 CQs that does not have bounded semantic generalised
  hypertree width. Then \pcq(\classQ) is \wone-hard.
\end{theorem}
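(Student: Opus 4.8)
The plan is to reduce the statement for query classes to the statement for hypergraph classes (Theorem~\ref{thm:mainhard}) by passing through cores, exactly as the excerpt sets up with Lemma~\ref{lem:hgred}. So let $\classQ$ be a recursively enumerable class of degree~2 CQs with unbounded semantic generalised hypertree width. First I would pass to $\core(\classQ)$, the class of cores of the queries in $\classQ$. Since semantic \ghw is by definition invariant under homomorphic equivalence, and $q$ is homomorphically equivalent to $\core(q)$, the class $\core(\classQ)$ also has unbounded semantic \ghw; moreover $\sghw(\core(q)) \le \ghw(\core(q))$ since $\core(q)$ lies in its own equivalence class, so in fact $\core(\classQ)$ has unbounded (ordinary) \ghw. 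Let $\classH^{\core(\classQ)}$ be the class of hypergraphs of these cores; it then has unbounded \ghw as a hypergraph class.

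Next I need that $\classH^{\core(\classQ)}$ is a class of degree~2 hypergraphs so that Theorem~\ref{thm:mainhard} applies. This is the step that requires a small argument rather than being immediate, and I expect it to be the main obstacle: it is not a priori obvious that taking the core preserves degree~2 of the hypergraph. The key fact is that the core of $q$ is (isomorphic to) an induced subquery of $q$ in the sense that it is obtained by a retraction — a homomorphism $h$ from $q$ onto a subquery $q'$ with $h$ the identity on $q'$. Deleting atoms and identifying variables can only decrease the degree of each surviving variable in the hypergraph (identifying $x$ with $y$ replaces $I_x, I_y$ by their union viewed as edge sets, but since $h$ is a retraction onto a subquery, the surviving variables keep only the edges among the retained atoms), so $\degree(H^{\core(q)}) \le \degree(H^q) \le 2$. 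Hence $\classH^{\core(\classQ)}$ has degree~2. I would also note that $\classH^{\core(\classQ)}$ is recursively enumerable: given the enumeration of $\classQ$, cores are computable (the core of a CQ is computable, if inefficiently) and we enumerate the resulting hypergraphs.

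With these pieces in place the argument closes: $\classH^{\core(\classQ)}$ is a recursively enumerable class of degree~2 hypergraphs with unbounded \ghw, so by Theorem~\ref{thm:mainhard}, $\pcq(\classH^{\core(\classQ)})$ is \wone-hard under fpt-reductions. By Lemma~\ref{lem:hgred}, $\pcq(\classH^{\core(\classQ)})$ is fpt-reducible to $\pcq(\classQ)$, and since \wone-hardness transfers along fpt-reductions, $\pcq(\classQ)$ is \wone-hard under fpt-reductions. One technical point to handle carefully is the ``more expansive reading'' mentioned just before the theorem — that a CQ counts as degree~2 if its hypergraph has degree~2 even when a variable appears in more than two atoms — but this causes no trouble since the whole argument is phrased in terms of the hypergraph of the query throughout, and Lemma~\ref{lem:hgred} and Theorem~\ref{thm:mainhard} are both stated at the hypergraph level.
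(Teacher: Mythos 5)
Your proposal is correct and follows essentially the same route as the paper: pass to the class of hypergraphs of the cores, use that $\sghw(q)=\ghw(\core(q))$ (you derive the needed inequality directly from invariance of \sghw under homomorphic equivalence, which is fine), observe that the core's hypergraph is a subhypergraph of the original and hence still has degree 2, apply the hypergraph-level hardness result, and transfer back via Lemma~\ref{lem:hgred}. Your additional remarks on recursive enumerability of $\classH^{\core(\classQ)}$ and on why the retraction preserves degree are slightly more detailed than the paper's one-line justifications, but substantively identical.
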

\begin{proof}
  Let $\mathcal{H}^{\core(\classQ)}$ be the class of all hypergraphs
  of the cores of the queries in $\classQ$.  It is known that the
  semantic generalised hypertree width $\sghw(q)$ of a CQ $q$ is
  precisely
  $\ghw(\core(q))$~\cite{DBLP:journals/jacm/BarceloFGP20}. Thus, if
  \classQ has unbounded \sghw, $\mathcal{H}^{\core(\classQ)}$ has
  unbounded \ghw. Recall that the hypergraph of $\core(q)$ is a
  subhypergraph of the hypergraph of $q$ and thus will also have
  degree 2. Thus, we can apply Theorem~\ref{thm:deg2} and see that
  $\pcq(\mathcal{H}^{\core(\classQ)})$ is \wone-hard. By
  Proposition~\ref{prop:hgred} the same also holds for $\pcq(\classQ)$.
\end{proof}
Note that semantic fractional hypertree width is also equal to the
\fhw of the core~\cite{DBLP:conf/ijcai/ChenGLP20} and thus again
bounded if and only if \sghw is bounded, assuming bounded degree.

The tractability of $\cqprob(\classQ)$ where $\classQ$ has bounded
\sghw is known due to~\citet{DBLP:conf/cp/ChenD05}. Thus by analogous
argument to Theorem~\ref{thm:de2} we also observe the following extension.
\begin{theorem}
  \label{thm:semde2}
    Assume $\fpt \neq \wone$. Let $\mathcal{Q}$ be a class of degree 2 CQs. The following three statements are equivalent:
  \begin{enumerate}
  \item $\cqprob(\mathcal{Q})$ is tractable;
  \item $\pcq(\mathcal{Q})$ is fixed-parameter tractable;
  \item $\mathcal{Q}$ has bounded semantic generalised hypertree width.
  \end{enumerate}
\end{theorem}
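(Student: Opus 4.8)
The plan is to follow the same three-way equivalence template as Theorem~\ref{thm:de2}, but invoking the query-level machinery rather than the hypergraph-level reduction. For the direction ``bounded \sghw $\Rightarrow$ tractable'', I would simply cite the result of \citet{DBLP:conf/cp/ChenD05} showing that $\cqprob(\classQ)$ is in \ptime whenever $\classQ$ has bounded semantic generalised hypertree width; no new argument is needed here. The implication ``tractable $\Rightarrow$ fixed-parameter tractable'' is trivial, since an algorithm running in polynomial time in $\norm{q} + \norm{D}$ is in particular an fpt algorithm for the parameterisation by the hypergraph of $q$.

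The only substantive work is the contrapositive of ``fixed-parameter tractable $\Rightarrow$ bounded \sghw'', i.e. showing that if $\classQ$ has unbounded \sghw then $\pcq(\classQ)$ is not fpt (under $\fpt \neq \wone$). This is precisely Theorem~\ref{thm:hardq}, which I would apply directly: it gives that $\pcq(\classQ)$ is \wone-hard under fpt-reductions for any recursively enumerable degree 2 class $\classQ$ of unbounded \sghw. Combining this with the assumption $\fpt \neq \wone$ rules out fixed-parameter tractability, closing the cycle. One small subtlety to address: Theorem~\ref{thm:hardq} is stated for recursively enumerable classes, whereas Theorem~\ref{thm:semde2} is not — but as in the proof of Grohe's characterisation and Theorem~\ref{thm:de2}, the hardness needed for the characterisation is really about every sufficiently ``rich'' finite sub-collection of hypergraphs/queries within the class, and the standard move is to note that a non-enumerable class of unbounded \sghw still contains, for each bound, witnesses of larger \sghw, so one argues via the recursively enumerable subclass generated by those witnesses; I would mention that this is handled exactly as in Theorem~\ref{thm:de2}.

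The main obstacle, such as it is, is bookkeeping rather than mathematics: making sure that the ``degree 2'' hypothesis is used consistently in the sense the paper intends (a CQ has degree 2 if its hypergraph has degree 2, so that cores remain degree 2 as noted before Theorem~\ref{thm:hardq}), and that the three cited ingredients — Chen–Dalmau tractability, Theorem~\ref{thm:hardq}, and the equivalence $\sghw(q) = \ghw(\core(q))$ from~\cite{DBLP:journals/jacm/BarceloFGP20} already folded into Theorem~\ref{thm:hardq} — line up on the same notion of width. Since all three pieces are already in place, the proof is essentially a one-paragraph assembly: cite~\cite{DBLP:conf/cp/ChenD05} for (3)$\Rightarrow$(1), observe (1)$\Rightarrow$(2) trivially, and use Theorem~\ref{thm:hardq} together with $\fpt\neq\wone$ for the contrapositive of (2)$\Rightarrow$(3).
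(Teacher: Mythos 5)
Your proposal matches the paper's argument exactly: the paper also derives (3)$\Rightarrow$(1) from \citet{DBLP:conf/cp/ChenD05}, takes (1)$\Rightarrow$(2) as trivial, and obtains (2)$\Rightarrow$(3) by contraposition from Theorem~\ref{thm:hardq} under $\fpt\neq\wone$, ``by analogous argument to Theorem~\ref{thm:de2}''. Your aside on the recursive-enumerability hypothesis is a reasonable way to handle a subtlety the paper itself glosses over, but it does not change the route.
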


\subsection{Counting}
\label{sec:counting}

\citet{DBLP:journals/tcs/DalmauJ04} showed a matching result to
Proposition~\ref{prop:grohe} for to the corresponding counting problem
\cqcount.  To be precise, by \cqcount we consider the problem of
computing $|q(D)|$ for given \emph{full} CQ $q$  and database $D$. We also again consider the
parameterisation by the query hypergraph \pcqcount. In this setting, The main result of~\cite{DBLP:journals/tcs/DalmauJ04} then reads as follows.
\begin{proposition}[\citet{DBLP:journals/tcs/DalmauJ04}]
    \label{prop:dalmau}
    Assume $\fpt \neq \sharpwone$\footnote{By slight abuse of notation we also refer to the class of fixed-parameter polynomial counting problems as $\fpt$ when speaking of counting problems}. Then for every recursively enumerable class $\mathcal{Q}$ bounded arity CQs the following three statements are equivalent.
    \begin{enumerate}
    \item $\cqcount(\mathcal{Q})$ is in $\fptime$;
    \item $\pcqcount(\mathcal{Q})$ is in $\fpt$;
    \item $\mathcal{Q}$ has bounded treewidth.
    \end{enumerate}
\end{proposition}

Recall that we only consider full CQs, i.e., queries with
no existential quantification. For counting this is an important
restriction since~\citet{DBLP:journals/jcss/PichlerS13} show that
even for acyclic CQs the problem is \sharpp-complete in the presence
of even a single existentially quantified variable. This restriction
also aligns our problem \cqcount with the popular problem of counting
homomorphisms when viewing $q$ and $D$ as relational
structures. \citet{DBLP:journals/jcss/PichlerS13} also establish the
following upper bound.

\begin{proposition}[\citet{DBLP:journals/jcss/PichlerS13}]
  \label{prop:counttrac}
  Let $\mathcal{Q}$ be a class of CQs with no existential quantification and bounded \ghw. Then $\cqcount(\mathcal{Q})$ is in $\fptime$.
\end{proposition}

Recall the reduction from Theorem~\ref{thm:hmred}. In the full proof we
show that, modulo projection, the result of the reduction produces
the exact same results as the original query.  Through further
inspection of the full proof it is not difficult to verify that even without projection
the number of solutions stays the exact same after the reduction, i.e., the reduction is \emph{parsimonious} (cf.,~\cite{DBLP:journals/siamcomp/FlumG04}).

 \begin{theorem}
   \label{thm:countred}
  Let $\mathcal{H}$ be a recursively enumerable class of hypergraphs and let $\mathcal{M}$ be a class such that any member is a hypergraph dilution of a hypergraph in $\mathcal{H}$.
  Then $\pcqcount(\mathcal{M})$ is fixed-parameter parsimonious reducible to $\pcqcount(\mathcal{H})$.
\end{theorem}

From Proposition~\ref{prop:dalmau} it is straightforward to derive an
analogue of Theorem~\ref{thm:mainhard} for $\pcqcount$. Combining this observation
with Theorem~\ref{thm:countred} and Proposition~\ref{prop:counttrac} we
can then also obtain the matching result for the counting problem for full CQs with degree 2 and unbounded arity. In the following we write $\cqcount(\mathcal{H})$ and $\pcqcount(\mathcal{H})$, where $\mathcal{H}$ is a class of hypergraphs, for the problems $\cqcount$ and $\pcqcount$, respectively, restricted to all full CQs with hypergraph in $\mathcal{H}$.
\begin{theorem}
  \label{thm:count}
    Assume $\fpt \neq \sharpwone$. Then for every recursively enumerable class $\mathcal{H}$ of degree 2 hypergraphs following three statements are equivalent.
    \begin{enumerate}
    \item $\cqcount(\mathcal{H})$ is in $\fptime$;
    \item $\pcqcount(\mathcal{H})$ is in $\fpt$;
    \item $\mathcal{H}$ has bounded generalised hypertree width.
    \end{enumerate}
\end{theorem}

\section{On Arbitrary Bounded Degree}
\label{sec:highdeg}

The results of the previous section ask a natural next question: what about arbitrary bounded degree?
In this section we briefly
discuss possible paths towards this goal and give a generalisation of our main structural result to arbitrary fixed degrees.

It is an open question whether Theorem~\ref{thm:excjig} holds also under the presence of bounded degree above 2. We can however state the analogous theorem for a generalisation of jigsaw hypergraphs that we will call pre-jigsaws.
\begin{definition}
  \label{def:prej}
  Let $J$ be an $\nxm$-jigsaw and $H$ a hypergraph. We say $H$ is a \emph{$\nxm$-pre-jigsaw}
  if there is an mapping $\pi \colon V(J) \to V(H)$ and a mapping $o \colon E(J) \to 2^{E(H)}$ such that:
  \begin{enumerate}
  \item for every two edges $e,f \in E(J)$, $o(e) \cap o(f) = \emptyset$,
  \item every edge in $H$ is in one image $o(e)$ for some $e\in E(J)$,
  \item for two vertices $u,v$ in the same edge $e$ of $J$, we can fix a path $P_{u,v}$ from $\pi(u)$ to $\pi(v)$ using only edges in $o(e)$ and no vertices in the image of $\pi$ other than $\pi(u)$ and $\pi(v)$,\label{paths}
  \item and every vertex in $V(H)$ is either in the image of $\pi$, or occurs in on of the fixed paths of Property~\ref{paths}.
  \end{enumerate}
\end{definition}

Pre-jigsaws generalise jigsaws in the sense that each single edge $e$
of a jigsaw is replaced by paths between the four
vertices in $e$. Moreover, this ``internal'' connection of vertices by a
jigsaw edge $e$ is replaced only by paths using the edges in $o(e)$. %
Note also that a jigsaw is
also a pre-jigsaw and every degree 2 $\nxm$-pre-jigsaw dilutes to a
$\nxm$ jigsaw by merging on the vertices in connecting paths from point \ref{paths} of Definition~\ref{def:prej}.

However, to obtain Theorem~\ref{thm:prej}, our definition of pre-jigsaws makes an important compromise.
While the path $P_{u,v}$ for $u,v$ in $e$ from the definition uses only edges in $o(e)$, it is still possible that an edge $f \in E(H)$ with $f \not \in o(e)$ contains a vertex $w$ that is used in the path $P_{u,v}$. This possibility of edges touching other paths is the key technical differences between  jigsaws and pre-jigsaws. The merging along the connecting paths to obtain a $\nxm$-jigsaw from a degree 2 $\nxm$-pre-jigsaw noted above is not always possible when the pre-jigsaw has degree greater than 2. Merging on the vertex $w$ in path $P_{u,v}$ and edge $f$ from above would merge edges in $o(e)$ with the edge $f \not\in o(e)$, and the resulting hypergraph after merging along paths will not be a jigsaw. Moreover, such edges that touch other paths can also be a source of unbounded arity, which in turn makes it unlikely that we can use Proposition~\ref{prop:grohe} directly to derive hardness for important classes of pre-jigsaws. However, even in extreme cases, the structure of pre-jigsaws is not trivial and the fact that certain hypergraphs always dilute to large pre-jigsaws is still significant. 

The critical Lemma~\ref{lem:shmin} from the degree 2 case does not
hold for higher degrees. Through a similar, but much more
involved, argument over the dual hypergraph one can still show that
high treewidth in the dual hypergraph implies the existence of a large pre-jigsaw. A full proof and further details are available in the extended version of this paper~\cite{DBLP:journals/corr/abs-2111-11532}.
\begin{theorem}
  \label{thm:prej}
  For every $d\geq 1$, there exists a function $f_d  \colon \mathbb{N} \to \mathbb{N}$ with the following property: for every
  $n\geq 1$, every hypergraph $H$ with degree $d$ and $\ghw(H) > f_d(n)$
  dilutes to an $\nxn$-pre-jigsaw.
\end{theorem}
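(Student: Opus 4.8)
The plan is to reuse the skeleton of the proof of Theorem~\ref{thm:excjig} and to replace only the one step that genuinely fails for degree above $2$, namely Lemma~\ref{lem:shmin}. As before we may assume $H$ is connected and reduced, so that by Lemma~\ref{lem:dualtw} we have $\tw(H^d) \geq \ghw(H) - 1$; choosing $f_d$ large enough, the Excluded Grid Theorem (Theorem~\ref{thm:excgrid}) then produces a $K\times K$-grid minor $G$ in the primal graph of $H^d$, where $K = K(d,n)$ is fixed at the end. Fixing an onto minor map $\mu$, write $\delta(w) = \phi^{-1}(\mu(w))$ for the pairwise disjoint, connected sets of edges of $H$ attached to the grid vertices $w$; being onto means $\bigcup_w \delta(w) = E(H)$. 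For each grid edge $\{w,a\}$ fix a \emph{witness} $c_{w,a}\in V(H)$ incident to an edge of $\delta(w)$ and to an edge of $\delta(a)$.

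The obstruction is the one pointed out after Lemma~\ref{lem:shmin}: $c_{w,a}$ may lie in several edges of $\delta(w)$, and, worse, a single vertex of $H$ may be incident to edges of up to $d$ distinct bags, so we can no longer merge a whole bag into one edge without simultaneously merging edges of neighbouring bags. The one new idea is to merge each bag only as far as is safe. Call a vertex \emph{pure for $w$} if all of its incident edges lie in $\delta(w)$, and \emph{mixed} otherwise; every witness is mixed. Performing a merging at every pure-for-$w$ vertex, for all $w$ at once --- equivalently, in the dual, contracting the edge $I_v$ of $H^d$ for every such $v$, since merging at $v$ is exactly contraction of the dual edge $I_v$ --- replaces each $\delta(w)$ by the edge set $o(w)$ obtained by collapsing the \emph{pure-connected components} of $\delta(w)$ (edges joined by chains of pure-for-$w$ vertices). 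One checks that mergings for distinct bags do not interfere, that each $o(w)$ is a connected hypergraph whose vertex set is exactly the set of mixed vertices occurring in $\delta(w)$ --- in particular every $c_{w,a}$ survives --- that the $o(w)$ remain pairwise disjoint as edge sets, and that the resulting hypergraph $H_1$ is a dilution of $H$. With $\pi(\{w,a\}) := c_{w,a}$, conditions (1) and (2) of Definition~\ref{def:prej} already hold for $H_1$, and connectedness of each $o(w)$ already supplies \emph{some} path between any two of its witnesses.

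What is still missing is that such a path may pass through a witness of another grid edge incident to $w$, or through a \emph{foreign} mixed vertex that was chosen as a witness of a grid edge not incident to $w$, and that a non-witness mixed vertex may lie on none of the fixed paths. The bulk of the work is a clean-up that removes these defects by restricting to an $\nxn$ sub-grid $G'$ of $G$ and re-choosing the witnesses. Here one exploits that every vertex of $H$ is incident to at most $d$ bags --- hence constrains only a bounded (in $d$) number of grid vertices, and can act as a witness for at most $\binom{d}{2}$ grid edges --- to run an iterated Ramsey-type thinning argument producing $G'$ together with a witness assignment for its edges such that (i) each chosen witness is incident, among the bags of $G'$, to only the two bags of its endpoints; (ii) inside each $o(w)$ with $w \in V(G')$, no witness of an edge incident to $w$ separates two others; and (iii) all vertices and edges carried by bags outside $G'$ can be stripped away --- by (i) the surviving such edges become subedges and are deleted --- leaving a hypergraph in which every remaining non-witness vertex lies on one of the fixed connecting paths. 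The function $f_d$ is then the composition of the Excluded-Grid bound with the blow-up incurred by this thinning.

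I expect this clean-up, rather than any isolated conceptual point, to be the main obstacle. Unlike the degree-$2$ case, the sub-grid one needs and the witness assignment mutually constrain one another, so they must be built simultaneously (and monotonically), and one then has to verify carefully that the final hypergraph literally admits maps $\pi$ and $o$ as in Definition~\ref{def:prej}, including the covering condition that the fixed paths together exhaust the vertices outside the image of $\pi$. The pure/mixed merging of the second paragraph is, I believe, the only genuinely new ingredient relative to the degree-$2$ argument; everything else is a more laborious version of the same programme.
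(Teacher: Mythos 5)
Your high-level programme is the paper's: pass to the dual, extract a large grid minor there via Lemma~\ref{lem:dualtw} and the Excluded Grid Theorem, and then select one witness vertex of $H$ per grid edge so that the pre-jigsaw path conditions of Definition~\ref{def:prej} can be met. You also identify the right quantitative lever (a vertex of degree $d$ meets at most $d$ bags and can witness at most $\binom{d}{2}$ grid edges). Two remarks on the differences. First, the pure/mixed merging is superfluous: a pre-jigsaw does not require each bag to be collapsed at all, and the paper performs no mergings in this step --- it simply takes $o(w)$ to be (a restriction of) the connected edge set $\delta(w)$ and realises the ``internal connection'' by paths, finishing with vertex and subedge deletions only. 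The merging is harmless, but it is not the new ingredient.

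The genuine gap is the step you yourself flag as the main obstacle: the witness selection, which you leave as an ``iterated Ramsey-type thinning argument'' on an $n\times n$ \emph{sub-grid} of the big grid. Thinning to a sub-grid does not work in general, because it does not increase the supply of candidate witnesses: if each pair of adjacent bags $\delta(w),\delta(a)$ is joined by a \emph{unique} vertex of $H$, every grid edge has exactly one admissible witness in any sub-grid, and your conditions cannot be repaired by deleting rows and columns. Concretely, take a bag whose edge set is a path of hyperedges in which the four forced witnesses occur in an order where two of them separate the other two; then no path between the outer witnesses avoids the inner ones, and every internal vertex of every sub-grid still has four neighbours, so the defect persists. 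The paper's fix is to \emph{coarsen} rather than thin: it starts from an $m\times m$ grid minor with $m=4\,\mathrm{rank}(H^d)^6 n^5=4d^6n^5$ and contracts $4d^6n^4\times 4d^6n^4$ blocks into the vertices of the target $n\times n$ grid. Each edge of the small grid then has at least $4d^4n^4$ distinct candidate witnesses, each block has that many internal rows and columns, and each chosen witness can ``restrict'' at most $d^2$ of these; a two-phase forward/backward greedy selection over bundles of size $2d^2n^2$ (formalised as extracting an \emph{expressive minor} of $H^d$, Definition~\ref{def:expr} and Lemma~\ref{lem:expminorexcl}) then produces witnesses such that all connecting paths avoid all other witnesses. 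Dualising this expressive minor and deleting the vertices off the fixed paths yields the pre-jigsaw (Lemma~\ref{lem:exp2qj}). Without the block contraction and an explicit selection procedure of this kind, your argument does not go through.
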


With respect to finding a characterisation of tractability for the bounded degree case, Theorem~\ref{thm:prej} is only a first step. In general, a hypergraph class $\mathcal{H}$ with unbounded \ghw and bounded degree may not contain all pre-jigsaws as dilutions of its members, but only some pre-jigsaws (cf. the proof of Theorem~\ref{thm:mainhard}). Recall that the \nxn-jigsaw dilutes to all lower dimension jigsaws, and therefore a class with degree 2 and unbounded \ghw will contain all jigsaws as its dilutions.  The same does not hold for pre-jigsaws, introducing further complexity to the bounded degree case. It is therefore of interest whether Theorem~\ref{thm:prej} can be made more precise in terms of showing that specific kinds of pre-jigsaws always exist as dilutions of hypergraphs with high \ghw.

Further exploration of Corollary~\ref{cor:subw} may offer an alternative path to the desired result. While the corollary states that submodular width and generalised hypertree width are equivalent under degree 2, the result is observed as a consequence of our complexity results and it remains unclear how to show the equivalence from a structural perspective. A structural argument would likely provide important further insight in the interaction between the two width parameters and may be amenable to a generalisation to bounded degree.

\section{Conclusion \& Outlook}
\label{sec:conc}
We have proposed hypergraph dilutions as an alternative to graph
minors in the study of structural properties of hypergraphs. While the two notions are connected technically, dilutions operate on the hypergraph level and therefore avoid critical issues with graph minors in the presence of arbitrarily large hyperedges. Our study of dilutions yields analogues of the Excluded Grid
Theorem and Grohe's characterisation of tractability for bounded arity
CQ answering, for degree 2 hypergraphs.  To the best of our knowledge
these are the first such results for hypergraphs of unbounded rank.

It remains open whether such a neat
delineation of tractable CQ answering even exists under more general
circumstances such as bounded degree. In support of this natural next step, we show a generalisation of our main structural result for fixed degree and discuss possible paths to extend the presented results to bounded degree. As an immediate next goal we hope to find a more informative proof of Corollary~\ref{cor:subw}, with the eventual goal of better understanding the submodular width of unbounded pre-jigsaws.

Dilutions are closely related to graph minors and our results here rely on key results for graph minors. However, recent thought in graph theory has identified \emph{tangles} as possibly even more
fundamental notion of what it means for a graph to be highly connected (e.g., see the discussion in~\cite{DBLP:journals/jct/RobertsonS03a}).
\citet{DBLP:journals/ejc/AdlerGG07} have previously generalised tangles to \emph{hypertangles} and showed their connection to other hypergraph notions (such as \ghw). The further study of tangles in hypergraphs thus presents an interesting alternative direction towards further understanding substructures in hypergraphs.

Finally, we are not aware of a version of Proposition~\ref{prop:hgred} for counting, and it is not immediate whether the arguments apply also for counting problems. Extending Theorem~\ref{thm:count} to classes of queries is left as an open problem. Recently, it has been shown that \cqcount is also difficult to approximate~\cite{DBLP:journals/toct/BulatovZ20} under certain conditions. Whether the more elaborate machinery for the approximation case also translates to our setting is a further interesting open question.

\section*{Acknowledgements}
Matthias Lanzinger acknowledges support by the Royal Society  project "RAISON DATA" (Project reference: RP\textbackslash R1\textbackslash 201074).
The author is grateful to the detailed feedback by anonymous referees which has greatly improved this manuscript. The author would also like to thank Marco Bressan for reporting an error in an earlier draft.

\bibliographystyle{ACM-Reference-Format}
\bibliography{refs}

\appendix
\section{Degree 2 in Practice}
While the contributions of this paper are primarily theoretical and the degree 2 case is viewed as a first step towards possible broader characterisations, it may
be of interest how relevant the degree 2 case is in practice. This is of particular relevance as degree 2 in graphs is highly restrictive, with only line graphs and cycles satisfying the condition. In hypergraphs the situation is different, and much more complex structures can be constructed with degree 2 as was already shown through $\nxn$-jigsaws or the example in Figure~\ref{fig:exmesh}. 

To offer some further perspective on this question we present some statistics from the HyperBench~\cite{10.1145/3440015} benchmark. HyperBench consists of collection of hypergraphs from synthetic and real-world CQs and Constraint Satisfaction Problems.

Of the 3649 hypergraphs in HyperBench, 932 have degree 2. Out of these 932 only 16 are obtained from synthetic queries. Furthermore, these hypergraphs are not necessarily simple and a significant number of them have high \ghw. Table~\ref{tab:deg} shows the number of degree 2 hypergraphs with $\ghw > k$ in detail. We see that of the 932 degree 2 hypergraphs, 649 are acyclic ($\ghw > 1$) and almost 400 have \ghw even higher than 5.

In summary, this suggests that degree 2 hypergraphs with non-trivial \ghw occur naturally in a variety of applications. This may also motivate the study of dilutions to jigsaws as a tool for determining \ghw or as a factor in solving degree 2 queries with high width.

\begin{table}[h]
  \centering
  \caption{Number of Degree 2 Hypergraphs in HyperBench with $\ghw > k$}
  \label{tab:deg}
  \begin{tabular*}{0.25\linewidth}[h]{c|c}
    $k$ & amount \\
    \hline 
    1 & 649 \\
    2 & 575 \\
    3 & 506 \\
    4 & 452 \\
    5 & 389
  \end{tabular*}
\end{table}

\section{Additional Details for Section~\ref{sec:dilution}}

\begin{proof}[Proof of Statement (3), Lemma~\ref{lem:dilut}]
  We will only argue that for any hypergraph $H$, merging all incident
  edges $I_v$ for a vertex $v$ by replacing all edges $I_v$ by a
  single new edge $e_v = \bigcup I_v \setminus \{v\}$ can not increase
  \ghw. Let us refer to the new hypergraph after the merging as
  $H'$. For the other operations the fact that \ghw only decreases is
  well known (see e.g.,~\cite{JACM21}). The full statement thus follows from proving this case.

  Let \tdecomp be a tree decomposition with minimal \ghw $k$ for $H$
  and associate a $\lambda_u$ to every $u\in T$ that describes a
  minimal edge cover in $H$ of each bag.  We will now derive new
  labels $(\lambda'_u)_{u \in T}$ such that for every $u\in T$ we have
  $|\lambda'_u|\leq |\lambda_u$ and $\lambda'_u$ is a set cover of
  $B_u \setminus \{v\}$ in $H'$.  We will then adapt the bags, such
  that at least one of them also covers the new edge $e_v$.

  For the appropriate new covers it is enough to set
  \[
    \lambda'_u =
    \begin{cases}
      (\lambda_u \setminus I_v) \cup \{e_v\} & \text{if } I_v \cap \lambda_u \neq \emptyset \\
      \lambda_u & \text{otherwise}
    \end{cases}
  \]
  By definition $e_v$ covers the same vertices as all edges in $I_v$ together (in $H'$).

  We now move on to defining the bags $B'_u$ of the new decomposition for $H'$.
  Let $T_v$ be the subtree  $\{u \in T \mid v\in B_u\}$. Then our new bags are defined as follows for every $u \in T$.
  \[
    B'_u =
    \begin{cases}
      B_u \cup e_v & \text{if } u \in T_v \\
      B_u & \text{otherwise}
    \end{cases}
  \]
  It is easy to see that all edges of $H'$ are now contained in some
  bag $B'_u$. The unchanged edges are still present in the same bag as before,
  and $e_v$ is in at least one bag, since $T_v$ can not be empty.
  
  To verify connectedness of the newly constructed decomposition it is enough to observe that every edge in $I_v$ will occur fully in some bag in $T_v$. This is because every must be fully covered by at least one bag. Since all edges in $I_v$ contain the vertex $v$, this must happen somewhere in $T_v$. With the updated bags, every vertex $w \in e_v$ now occurs in the union of $T_w$ and $T_v$. By the observation on $I_v$ above, $T_w \cap T_v \neq \emptyset$ and thus their union is again connected.
  
  What is left is to observe that $B'_u \subseteq \bigcup \lambda'_u$ for every node $u$.
  Observe that $v \in B_u$ only if $I_v \cap \lambda_u\neq\emptyset$ since the edges in $I_v$ are the only ones that contain $v$. Thus, we also have $e_v \subseteq B'_u$ only if $e_v \in \lambda'_u$. All unchanged edges are clearly still covered the same as in the original decomposition, as noted above.

  Hence, we see that $\left< T, B'_u\right>$ is a tree decomposition
  with \ghw at most $k$, since $\lambda'_u$ is a witness of set covers
  with at most $k$ elements for each bag.
\end{proof}

\begin{proof}[Proof of Theorem~\ref{thm:hmred}]
  Let $q$ be a CQ with hypergraph $M_q$ in $\mathcal{M}$ and $D_q$ a
  database with the same schema as $q$. In particular, we assume
  w.l.o.g. that $q$ has no self-joins. If it did we could reduce to a
  self-join free $q'$ with database $D'$ in polynomial time by
  splitting duplicate relation names in $q$ into new individual
  relation names where the relations in $D'$ are direct copies of the
  respective original relation in $D_q$.  The hypergraph of such a
  $q'$ would be the same as the hypergraph of $q$.

  By enumeration of $\mathcal{H}$, find a hypergraph $H$ such that
  $H$ dilutes to $M_q$  and let $W = (w_1,\dots,w_\ell)$
  be a dilution sequence from $H$ to $M_q$.  Note that $H$ and $W$
  depend on $M_q$, i.e., the parameter of the problem. We will now
  show that by traversing $W$ in reverse, we can construct (in
  fixed-parameter polynomial time) a query $p$ such that
  $\pi_{\vars{q}}(p(D_p)) = q(D_q)$, and
  the hypergraph of $p$ is $H$.

  To do so, we will show for each dilution operation $w_i$ -- 
  that produces hypergraph $H_{i}$ from $H_{i-1}$ -- how the query
  $q_{i-1}$ and database $D_{i-1}$ for $H_{i-1}$ can be transformed
  into an equivalent instance $q_i,D_i$ for $H_{i}$ with $\pi_{\vars(q_{i-1})}(q_i(D_i)) = q_{i-1}(D_{i-1})$.
  Thus, ultimately
  we can reduce from a query for $H_\ell=M_q$ to a query for
  $H_0=H$. We also argue for each operation that
  $\norm{D_{i-1}} \leq f(M_q)\norm{D_i}$, from which it will become apparent
  that this is indeed an fpt-reduction.

  It will be convenient to observe that the degree never increases along a dilution sequence, i.e., for all $1 \leq i \leq \ell$ it holds that \(\degree(H_i) \leq \degree(H_{i-1})\). The observation is easy to verify directly from Definition~\ref{def:dilut}.
  The reduction introduces new constants that will serve to link relations via functional dependence on the new constant.
  For this purpose, consider the new constants $(\const_i)_{i\geq 0}$ that do not occur in $D_q$. The final reduction will at most as many constants as the maximum number of tuples in a relation in $D_q$.
  \paragraph{ $w_i$ deletes a vertex $v$ from $H_{i-1}$}
  While the basic principle of this direction is simple, there are
  some technicalities that require a certain amount of care. In
  particular, deleting vertex $v$ can make two edges the same. Hence,
  reversing the operation is not as straightforward as the
  deletion. Fortunately, even a very direct approach will be enough
  for our purposes.
  
  Let $E_v$ be the edges in $H_{i-1}$ that are incident to $v$. For
  every edge $e \in E_v$, fix a $\pre(e) \in E(H_i)$ such that
  $\pre(e) \cup \{v\} = e$. Let $R_{\pre(e)}(\bar{x})$ be the atom in
  $q_i$ that corresponds to edge $\pre(e)$ in the hypergraph. Then,
  for each edge $e$ in $E_v$, create a new atom $S_e(\bar{x}, v)$ in
  $q_i$ where $\bar{x}$ are the arguments of atom
  $R_{\pre(e)}(\bar{x})$ and let
  \[
    S_e^{D_{i-1}} = R_{\pre(e)}^{D_i} \times \{(\const_0)\}
  \]
  where the product is interpreted as in relational algebra. The rest
  of $q_i$ and $D_i$ is made up of direct copies of those
  atoms/relations that correspond to edges that are in both $H_{i-1}$
  and $H_i$. Since all values in all tuples in the position of the newly introduced joins over $v$ are the same, it is straightforward to observe that $\pi_{\vars(q_{i-1})}(q_i(D_i)) = q_{i-1}(D_{i-1})$.

  Let us consider how the size of $D_{i-1}$ is related to the size of $D_i$.
  We create at most $\degree(v)$ new relations, where each relation is a relation from $D_{i-1}$ with each tuple extended by a constant.
  Thus, the representation of such a new relation of $S_e$ increases over the corresponding $R_{\pre(e)}$ only by some constant factor.
  Hence, overall at most $O(\degree(v) \norm{D_i})$ space (and time) is required to create the new relations. At most the whole previous database is kept, adding at most $\norm{D_i}$ size to the new $D_{i-1}$. Since degree never increases along dilution sequences, $\degree(v) \leq \degree(H)$ and we arrive at our bound of
  \[
    \norm{D_{i-1}} = O( \degree(H) \cdot \norm{D_{i}})
  \]
  \paragraph{ $w_i$ replaces the incident edges $E$ of vertex $v$ in $H_{i-1}$, by a new edge $e=\bigcup E \setminus \{v\}$ in $H_i$}
  Let $e_1, \dots, e_n$ be the edges that make up the set $E$.
  Let $R(\bar{e})$ be the atom corresponding to edge $e$ in $H_i$. In $q_{i-1}$ we replace $R(\bar{e})$ by new atoms $R_j(\bar{e_j})$ for every $j \in [n]$. Let  $v$  always be in the last position of the new atoms.
  To define the new relations, suppose $R^{D_i}$. Let $R'$ be $R^{D_i}$ extended by a new attribute $v$, with every tuple extended by a distinct $\const_i$ ($i \leq |R^{D_i}|$) in the new position. Let the new relations in $D_{i-1}$ for the new atoms $R_j$ in $q_{i-1}$ be
  \(
    R_j^{D_{i-1}} = \pi_{e_j}(R')
    \). Again everything except $R_e$ is copied directly from $q_i, D_i$.
    Since every tuple in $R'$ has a distinct $\const_i$ value for attribute $v$, $R'$ and, in consequence, every $R_j$ is functionally dependant on $v$. Since everything else in $q_i$ and $D_i$ remains unchanged we again have $\pi_{\vars(q_{i-1})}(q_i(D_i)) = q_{i-1}(D_{i-1})$.
    
  Clearly, the database can increase in size by no more than if we just copied $R'$ fully $n$ times. Again we see that $n\leq \degree(H_{i-1}) \leq \degree(H)$ and the detailed argument follows the same steps as in the vertex deletion case above.
  \[
    \norm{D_{i-1}} \leq  c \degree(H) \norm{D_{i}}
  \]
  \paragraph{ $w_i$ deletes a subedges $f \subset e$  from $H_{i-1}$}
  In this case it is enough to add a new $R_f(\bar{f})$ to $q_i$ to obtain $q_{i-1}$. The relation is naturally
  \(
    R_f^{D_{i-1}} = \pi_f(R_e^{D_i})
  \)
  and we have the following bound on size of the new database
  \(
    \norm{D_{i-1}} \leq  2 \norm{D_{i}}
  \). It is straightforward to verify that $q_i(D_i)=q_{i-1}(D_{i-1})$.
  \paragraph{Putting it all together.}
  We have shown how to reduce $q_\ell = q$ to $q_o =p$. The
  computational effort in each step from $i$ to $i-1$ consists only of extending relations by one attribute, copying a single relation, or projection, and is feasible in $O(\degree(H)(\norm{q_i}+\norm{D_i}))$ time. From the bounds on the database size derived
  for each operation we can deduce the following bound for the final
  database $D_p = D_0$
  \[\norm{D_p} = c\,\degree(H))^\ell \norm{D_q}\]
  Since we introduce no self-joins or duplicate variables in the same atom, the size of the final query depends only on the size of $H$.
  Recall, $H$ and $W$, and thus
  also $\ell$, depend only on the parameter $M_q$.   The described process thus
  reduces $q,D_q$ to $p,D_p$ in $f(M_q) (\norm{D_q})$ time, such that
  $\pi_{\vars(q)}(p(D_p)) = q(D_q)$.
\end{proof}

Before we show the \np-completeness, we first show the opposite of Lemma~\ref{lem:shmin} as its own
statement, and then observe \np-hardness of deciding hypergraph
dilutions as consequence of the two lemmas put together.
\begin{lemma}
  \label{lem:shminrev}
  Let $G$ be a connected graph and let $H$ a degree 2 hypergraph. If
  $G^d$ is a hypergraph dilution of $H$, then $G$ is a minor of $H^d$.
\end{lemma}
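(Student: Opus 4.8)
The statement is a converse to Lemma~\ref{lem:shmin}: given that $G^d$ is a hypergraph dilution of a degree~2 hypergraph $H$, I want to recover $G$ as a minor of $H^d$. The natural strategy is to track a dilution sequence $w_1,\dots,w_\ell$ from $H$ to (an isomorphic copy of) $G^d$ and translate each operation into a corresponding move on the dual side, building a minor map of $H^d$ onto $G^d{}^d = G$ incrementally. Since $H$ is degree~2, I first reduce to the case that $H$ is reduced (by Lemma~\ref{lem:redhyper} the reduced version is reached by a dilution sequence, and reducedness does not change whether $G$ is a minor of $H^d$, nor — since $H^{d^d}=H$ for reduced $H$ — does it break the duality); I also note degree never increases along dilutions, so every intermediate hypergraph still has degree at most~2.

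The heart of the argument is the duality table for the three dilution operations. Deleting a vertex $v$ of $H_{i}$ corresponds on the dual to deleting the edge $I_v$ of $H_i^d$; deleting a subedge $f\subsetneq e$ of $H_i$ corresponds to deleting the vertex $f$ of $H_i^d$ that is ``dominated'' by $e$ (note $f$, as a vertex of the dual, lies in $e$'s incidence set and in no others not already containing it — here degree~2 keeps the bookkeeping clean); and the merging operation on a vertex $v$ — replacing the edges $I_v$ by $\bigcup I_v\setminus\{v\}$ — is precisely a \emph{contraction} in the dual: the vertices of $H_i^d$ corresponding to the edges of $I_v$ all get identified into one. This is exactly the remark made in the text that ``the contraction operation of hypergraph minors is a dual operation to the edge merging in dilutions.'' So from the dilution sequence $H=H_0\to H_1\to\cdots\to H_\ell\cong G^d$ I read off a sequence of dual-side operations taking $H^d = H_0^d$ through vertex-deletions, edge-deletions, and vertex-contractions to $H_\ell^d\cong G^{d^d}=G$ (using that $G$, hence $G^d$, hence $H_\ell$, is reduced so that dualizing twice returns $G$). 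Each such operation — deleting a vertex, deleting an edge, contracting an edge — is a legal step in forming a minor (deleting edges and contracting edges produce minors; deleting vertices produces minors), so composing them exhibits $G$ as a minor of $H^d$. Concretely I would define the minor map $\mu\colon V(G)\to 2^{V(H^d)}$ by letting $\mu(x)$ be the set of vertices of $H^d$ (i.e.\ edges of $H$) that get contracted together and end up representing $x$; connectedness of each $\mu(x)$ in $H^d$ follows because contraction is only ever performed along a common hyperedge of the dual, which corresponds to the fact that the merged edge set $I_v$ is always a connected set of edges in $H$ (they share the vertex $v$), so on the dual side they form an edge and hence a connected set.

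The main obstacle I anticipate is not the high-level correspondence but the care needed in matching up the \emph{bookkeeping} of which dual operation a given primal operation induces, especially around the reducedness assumptions and the possibility that a primal operation has a degenerate effect (e.g.\ a vertex deletion that makes two edges coincide, collapsing two dual vertices into one; or a merging that produces an edge already present). These degeneracies must be absorbed into the dual picture as additional vertex-identifications or deletions, which are still minor-legal, but one has to check that they never force a non-connected branch set or a missing adjacency. In particular I would verify that whenever $G$ has an edge $\{x,y\}$, the corresponding vertex $g_{x,y}$ of $G^d\cong H_\ell$ survives in $H_\ell$, hence the vertex of $H^d$ it came from was never deleted and lies in the incidence sets that became $\mu(x)$ and $\mu(y)$, giving the required dual edge between the two branch sets. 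Degree~2 is what keeps all of this tractable: each vertex of $H$ being in at most two edges means each vertex of $H^d$ — and the structure of incidences generally — stays simple enough that the primal-to-dual translation is an honest involution at every step.
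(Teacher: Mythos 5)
Your proposal is correct and matches the paper's argument in substance: the paper also walks the dilution sequence and assigns to each surviving edge of $G^d$ (i.e.\ vertex of $G$) the set of original edges of $H$ that were merged or collapsed into it (tracked via explicit labels $L(e)$), then checks disjointness and connectedness exactly as you describe, including the same observation that a vertex deletion only identifies two edges when they share a vertex other than the deleted one. Your framing as a sequence of dual-side minor operations is just a different presentation of the same bookkeeping, and your explicit check of the adjacency condition via the surviving vertex $g_{x,y}$ is if anything slightly more complete than the paper's.
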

\begin{proof}
  Suppose $G^d$ is a dilution of $H$. We will construct an appropriate
  minor map $\mu \colon V(G) \to 2^{E(H)}$ from $G$ into $H^d$.
  
  For this purpose, suppose we keep track of labels $L(e)$ for the edges of the hypergraphs the dilution process. We set $L(e) = \{e\}$ initially and the labels are then updated as follows, depending on operation.
  When deleting a vertex collapses multiple edges $e_1,\dots,e_\ell$ into one edge $e_0$ we set $L(e_0)=\bigcup_{i=0}^\ell L(e_i)$ and copy the other labels unchanged. When deleting a subedge $e_1 \subset e_0$ we set $L(e_0) = L(e_1) \cup L(e_0)$ and copy any other labels unchanged. Finally, when merging edges $I_v$ over a vertex $v$, we set the label of the new edge $e_v$ as $L(e_v) = \bigcup_{e \in I_v} L(e)$.

  After dilution from $H$ to $G^d$, we then every edge of $G_v$ associated with a label which is a set of edges of $H$. Since $E(G^d)=V(G)$, $L$ is thus a function $V(G) \to 2^{E(G)}$. We claim that $L$ is a minor map, i.e., that every image set $L(e)$ is connected in $H^d$ and any two $L(e_1),L(e_2)$ are disjoint if $e_1\neq e_2$.

  We first observe the disjointness of any two labels in $G^d$.  Note that by construction, all labels are trivially disjoint in $H$. In every step, every label is either copied unchanged, or multiple labels are combined into a single label. Since the individual parts of this combined label are disjoint with all unchanged labels, so is the combined label.

  Connectedness of a set of edges implies connectedness of the respective vertices in the dual hypergraph. Hence, connectedness of a image of the minor map in $H^d$ follows directly from the connectedness of any $L(e)$ in $H$ in $G^d$. For the merging and subedge deletion operations it is straightforward to see that connectedness is preserved in the construction of the labels.
 When deleting a vertex, observe that multiple edges collapse into one only if their only difference was vertex $v$ and they are the same otherwise (hence actually $\ell \leq 1$ in the case above). Since they are the same otherwise they are connected via at least one vertex that is not $v$ (they can not both contain only $v$). Hence, $L$ is a minor map from $G$ into (and actually onto) $H^d$.
\end{proof}

\begin{proof}[Proof of Theorem~\ref{thm:npdilt}]
  Recall, it is known to be \np-complete to decide whether a graph $G$ is a minor of
  graph $F$~\cite{DBLP:books/fm/GareyJ79}. We prove \np-hardness of
  our problem by reduction from graph minor checking.  By
  Lemmas~\ref{lem:shmin} and~\ref{lem:shminrev} we have that $G$ is a
  minor of hypergraph $H^d$ if and only if $G^d$ is a hypergraph
  dilution of $H$. The desired reduction then follows from setting
  $H=F^d$ and observing that the dual of graph $F$ always has degree
  at most 2.

  \np-membership follows from the observation that hypergraph
  dilutions are, in a sense, \emph{monotonically decreasing}. That is,
  if $H'$ dilutes to $H$, then $|V(H)|\leq |V(H')|$,
  $|E(H)|\leq|E(H')|$, and at least one of the inequalities is strict. Hence, if $H$ is a hypergraph dilution of $H'$, then there is a linear length dilution sequence from $H'$ to $H$. Hence, a linear size guess of a dilution sequence leads to an \np algorithm for the problem.
\end{proof}

\section{Additional Details for Section~\ref{sec:forbid}}

\begin{proof}[Proof of Lemma~\ref{lem:dualtw}]
  Let $\left< T, (D_u)_{u \in T} \right>$ be a  tree
  decomposition of $H^d$ with width $k$.  We construct a generalised hypertree decomposition (GHD
  $\left< T, (B_u)_{u \in T}, (\lambda_u)_{u\in T}, \right>$ for $H$
  by taking for every node $u$ in $T$, $\lambda_u = D_u$ and
  $B_u = \bigcup \lambda_u$ (note that the elements of $D_u$ are edges in $H$). Recall, a GHD is a tree decomposition with an additional labelling $(\lambda_u)_{u\in T}$ that describes an explicit edge cover for each bag. 

  It is not difficult to verify that this is indeed a GHD of width $k+1$ of $H$.
  To do so we have to argue two properties (the width is trivial). First, that for every $e \in E(H)$, there is a node $u$ such that $e \subseteq B_u$, and second that the connectedness condition holds.

  For the first property, consider an arbitrary $e \in E(H)$ as a vertex in $H^d$. Then, there is some node $u$ such that $e \in D_u$, since we have a tree decomposition of $H^d$. Then, also $e \in \lambda_u$, and in consequence $e \subseteq B_u = \bigcup \lambda_u$.

  For connectedness, consider an arbitrary vertex $v \in V(H)$.  
  Let $f_v \in E(H^d)$ be the edge corresponding to $v$ in the dual. Recall, the elements of $f_v = \{e_1,\dots, e_n\}$ correspond to the edges incident to $v$ in $H$. Let $u$ be a node in $T$ such that $f_v \subseteq D_u$.
  Then, by connectedness of the TD, the subtrees $T_{e_i} = \{e_i \in D_u \mid u \in T\}$ for $e_i \in f_v$ are each connected and all contain the node $u$. Hence, also $T_v = \bigcup_{e_i \in f_v} T_{e_i}$ is connected. By our definition of the bags in the GHD, $v$ occurs exactly in nodes that have an $e_i \in f_v$ in their $\lambda$ label, i.e., in the nodes of $T_v$. Thus, we see that connectedness holds for every vertex in the constructed GHD.
\end{proof}

\section{Proofs for Section~\ref{sec:highdeg}}

We will again show our structural result on hypergraphs via graph minors in the dual. In particular, we first extend the notion of a graph minor to hypergraphs in a way that takes hyperedges into account. We will call such minors \emph{expressive minors}. We show how such expressive grid minors are related to normal grid minors via the rank of the hypergraph. From there we will then see that for classes of bounded degree, the dual hypergraphs also always contain large expressive grid minors if their treewidth is unbounded.

\begin{definition}[Expressive Minor Map]
  \label{def:expr}
  Let $G$ be a graph and $H$ a hypergraph. We say that a mapping $\mu \colon V(G) \to 2^{V(H)}$ is an \emph{expressive minor map} from $G$ into $H$ if $\mu$ is a minor map from $G$ onto $H$ and there exists a mapping $\rho \colon E(G) \to E(H)$ such that:
  \begin{enumerate}
  \item $\rho$ is injective, i.e., no two edges in $G$ are mapped to the same edge in $H$,
  \item for any edge $e=\{u,v\}$ in $G$, $\rho(e)$ intersects both $\mu(u)$ and $\mu(v)$,
  \item and for any two incident edges $e_1, e_2$  with $v\in e_1 \cap e_2$ in $G$, there is a path from $\rho(e_1)$ to $\rho(e_2)$ that uses only vertices in $\mu(v)$ and no edge in $\rho(E(G))$ (except for $\rho(e_1)$ and $\rho(e_2)$ as the start/end).  
  \end{enumerate}
   We say $G$ is an \emph{expressive minor} of $H$ if there exists an expressive minor map from $G$ into $H$.
\end{definition}
Note that if $H$ is a simple graph, i.e., if the rank of all edges is 2, then every minor is an expressive

Previous work considered graph minors into the Gaifman graph of hypergraphs. However, since every hyperedge of rank $r$ becomes an $r$-clique in the Gaifman graph, complex parts of the graph can be mapped into single large enough hyperedges, creating a mismatch in structure between a hypergraph and its graph minor.
The aim of expressive minors is to retain more structure from graph minors in hypergraphs by also imposing requirements on the edge structure of the hypergraph. Despite the additional restrictions, we can still observe the existence of large \emph{expressive} grid minors in the following sense.

\begin{figure*}[t]
  \centering
  \begin{subfigure}[b]{0.7\textwidth}
    \centering
    \includegraphics[width=\textwidth]{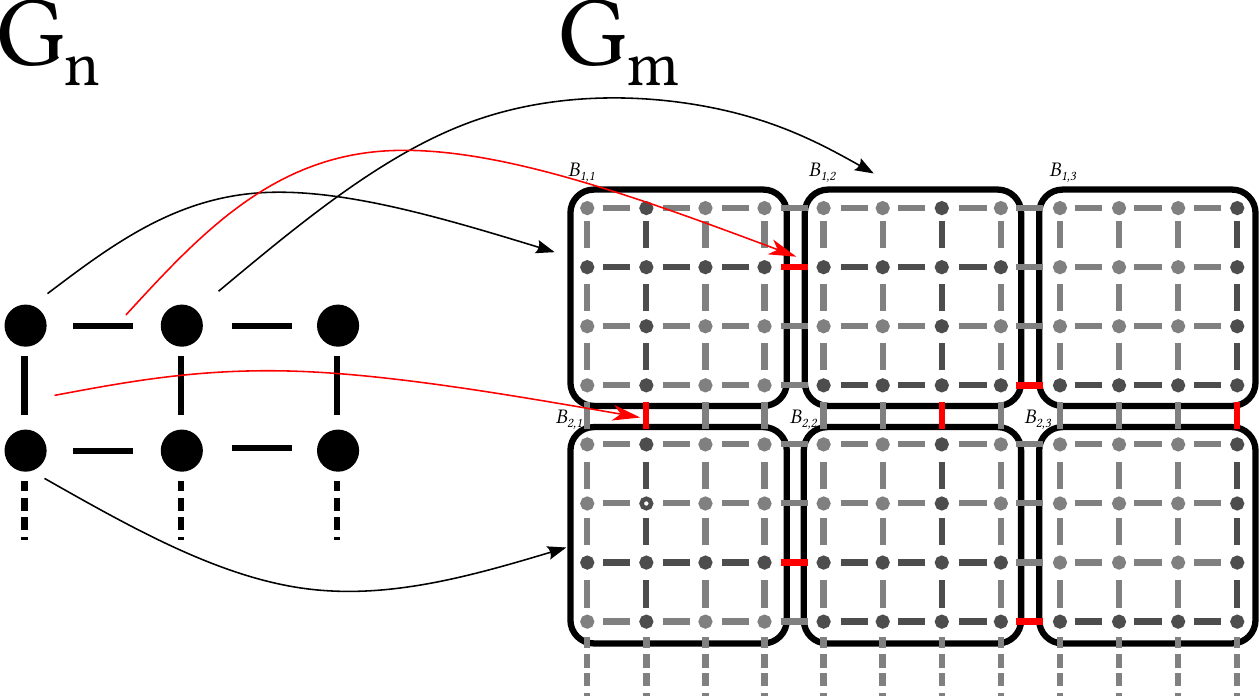}
    \caption{Illustration of intended construction. The large grid $G_m$ is grouped into $\nxn$ blocks, which correspond to single vertices in the $\nxn$ grid $G_n$. For each connection between blocks a single edge is fixed (in red), such that its row/column inside the block will not touch (after mapping to $H$ via $\mu_m$, not pictured) any other edge in $\rho$. }
    \label{fig:ex.fullblock}
  \end{subfigure}
  \hspace{5em}
  \begin{subfigure}[b]{0.2\textwidth}
    \centering
    \includegraphics[width=\textwidth]{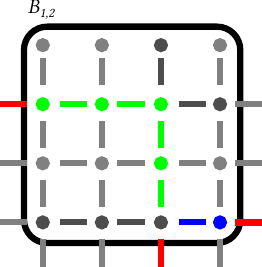}
    \vspace{2em}
    \caption{Paths inside block $B_{1,2}$ of Figure~\ref{fig:ex.fullblock} using the rows/columnd of fixed edges from $G_m$, corresponding to the paths between marked edges in $\rho$.}
    \label{fig:ex.block12}
  \end{subfigure}
  \hfill
\end{figure*}

\begin{lemma}
  \label{lem:expminorexcl}
  Let $H$ be a reduced hypergraph and let $m = 4 \rank(H)^5 n^5$. If the $m\times{}m$-grid is a minor of the primal graph of $H$, then the $\nxn$-grid is an expressive minor of $H$.
\end{lemma}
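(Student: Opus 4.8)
The plan is to start from a large grid minor in the primal graph and ``tidy it up'' so that the branch sets and the connecting paths can be re-routed through genuine hyperedges of $H$, thereby upgrading an ordinary minor to an expressive one. Concretely, fix a minor map $\mu_0$ witnessing the $m\times m$-grid $G_m$ as a minor of the primal graph $G(H)$, chosen onto and with each branch set $\mu_0(v)$ a tree in $G(H)$. Every primal edge used inside a branch set, and every primal edge realising an adjacency of $G_m$, comes from \emph{some} hyperedge of $H$; the difficulty is purely that many grid edges and many internal branch-set edges may be forced to use the \emph{same} hyperedge, which is exactly what the injectivity condition (1) and the disjoint-connection condition (3) of Definition~\ref{def:expr} forbid. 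The factor $m = 4\rank(H)^6 n^5$ is the budget that lets us discard enough of the grid to remove all such collisions while retaining an $n\times n$ grid.

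First I would bound how much a single hyperedge can ``interfere''. A hyperedge $e$ with $|e|\le \rank(H)$ contributes a clique of at most $\binom{\rank(H)}{2}$ primal edges, so it can be responsible for the adjacencies of at most $O(\rank(H)^2)$ grid edges and can lie inside at most $O(\rank(H))$ branch sets. The standard grid-cleaning toolkit then applies: passing to a subgrid, one can make the branch sets ``small'' (each $\mu(v)$ meeting only boundedly many hyperedges) by the usual argument that in a large grid minor one may re-choose branch sets to be paths/combs of controlled size, at the cost of shrinking the grid by a polynomial factor. After this step each branch set $\mu(v)$ is connected using edges from a bounded number of hyperedges, and each grid adjacency is realised inside a single hyperedge; the key counting point is that the ``conflict hypergraph'' — whose hyperedges group together those grid-vertices/grid-edges that share an $H$-hyperedge — has bounded block size, so a further pass to a subgrid of side $\Omega(m / \rank(H)^{O(1)})$ removes every conflict, i.e. distinct surviving grid edges now use distinct hyperedges of $H$ and no hyperedge used for an adjacency lies inside a surviving branch set. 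This yields the injective $\rho$ of condition (1) and condition (2).

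For condition (3) — that incident grid edges $e_1,e_2$ be joined by a path avoiding all of $\rho(E(G_n))$ except at its endpoints — I would exploit the grid structure directly: at a grid vertex $v$ the incident grid edges all attach to the single branch set $\mu(v)$, which after the cleaning is connected in $H$ via hyperedges \emph{not} in the image of $\rho$ (that was arranged in the previous step). The route $\rho(e_1) \to (\text{hyperedge touching }\mu(v)) \to \cdots \to \rho(e_2)$ inside the dual-path structure of $\mu(v)$ is then a path in $H$ meeting $\rho(E(G_n))$ only at its ends, because every hyperedge along it is an ``internal'' hyperedge of the branch set and hence outside $\rho(E(G_n))$. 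Reducedness of $H$ is used here and in the cleaning to guarantee there are no degenerate repeated vertex types or empty edges confusing the path/hyperedge correspondence.

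The main obstacle I anticipate is making the ``grid-cleaning'' quantitatively precise enough to land on the stated bound $m = 4\rank(H)^6 n^5$: one must track, simultaneously, (a) the shrinkage needed to make branch sets small, (b) the shrinkage needed to make the hyperedge-conflict classes disjoint across the surviving grid, and (c) that the re-routing for condition (3) does not itself reintroduce conflicts. Each of (a)–(c) is a polynomial loss in the grid side, and the exponents $6$ in $\rank(H)$ and $5$ in $n$ are what multiply out; getting these to combine cleanly (rather than, say, $\rank(H)^{O(1)} n^{O(1)}$ with worse constants) is the delicate bookkeeping, and is where I would spend the most care. Everything else reduces to the observation that a hyperedge of size $r$ is ``seen'' by only $O(r^2)$ primal edges, which caps the damage any single hyperedge can do.
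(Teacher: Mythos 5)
Your overall plan (bound each hyperedge's interference by $O(\rank(H)^2)$ primal edges, then spend the polynomial size budget to avoid collisions) matches the paper's in spirit, but the concrete mechanism you propose is different from the paper's and, as stated, does not work. The paper does \emph{not} shrink the branch sets: it partitions the $m\times m$ grid into $n\times n$ contiguous blocks of side $4\rank(H)^6n^4$ and takes each new branch set to be the union of all old branch sets in one block. The point of keeping the branch sets this large is \emph{redundancy}: between two adjacent blocks there are $4\rank(H)^6n^4$ crossing grid edges, each realised by some hyperedge, and since one hyperedge can account for at most $\rank(H)^2$ of them, every small-grid adjacency has at least $4\rank(H)^4n^4$ \emph{distinct} candidate hyperedges to serve as its $\rho$-image. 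Injectivity of $\rho$ and the path condition (3) are then obtained by an explicit two-phase greedy selection among these candidates (first choosing bundles, then filtering, always picking connectors attached to rows/columns of the blocks not ``restricted'' by previously marked edges); condition (3) follows because every row meets every column inside a block.

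Your version instead makes the branch sets small and then asserts that ``a further pass to a subgrid removes every conflict.'' This is the gap. Once the branch sets are small, a pair of adjacent branch sets may be joined by primal edges coming from a \emph{single} hyperedge, and that same hyperedge may be the unique realiser of up to $\rank(H)^2$ other grid adjacencies scattered anywhere in the grid. Conflicts of this kind live on grid \emph{edges}, not grid vertices: a subgrid necessarily retains every adjacency between the vertices it keeps, so no choice of subgrid can separate two conflicting adjacencies whose endpoints both survive, and with a unique realiser per adjacency there is no freedom left to re-select $\rho$. The same problem infects your argument for condition (3), which presupposes that the branch-set-internal hyperedges are disjoint from the image of $\rho$ --- again something you can only guarantee when each adjacency has many candidate realisers to choose from. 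Closing the gap forces you to reintroduce redundancy by fattening the branch sets, which leads essentially back to the paper's block construction and its selection procedure.
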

\begin{proof}
  Let $a=rank(H)$, let $G_m$ be an $m \times{}m$-grid and let $\mu_m$ be a
  minor map from $G$ \emph{onto} $H$. Now consider the partition of $G$ into
  $n\times{}n$ blocks of dimension $(4a^5n^4)\times(4a^5n^4)$. That is,
  for $k\in[n], l\in [n]$, we consider blocks $B_{k,\ell}$ which
  contain the vertices $v_{i,j}$ of the grid where
  $(k-1)\cdot 4a^5n^4\leq i \leq k\cdot 4a^5n^4$ and
  $(\ell-1)\cdot 4a^5n^4\leq j \leq \ell\cdot 4a^5n^4$.

  Let $G_n$ now be an $\nxn$-grid and define
  $\mu_n \colon V(G) \to 2^{V(H)}$ as
  $\mu_n(v_{k, \ell}) = \bigcup_{u \in B_{k,\ell}}\mu_m(u)$ for all
  $k\in [n], \ell\in[n]$. It is straightforward to observe that $G_n$
  is a minor of $G_m$ and therefore $G_n$ is also a minor of the Gaifman graph of  $H$.
  Since vertices in $G_n$ correspond one-to-one to blocks in $G_m$ we also write $B_v$ for the block in $G_m$ corresponding to vertex $v$ in $G_n$.
  We now construct an appropriate mapping
  $\rho \colon E(G) \to E(H)$ to show that $\mu_n$ is indeed an
  expressive minor. Below, we refer to the edges in $E(H)$ that are mapped to by $\rho$ as \emph{marked} edges.

  The most challenging part in constructing $\rho$ is to respect the third condition of Definition~\ref{def:expr}. From the grid minor we have knowledge of paths between different edges, in particular if there is a path from edges $u$ to $v$ in $G_m$, then there is a path from any edge touching $\mu_m(u)$ to any edge touching $\mu_m(v)$ in $H$ since the images of a minor map are always connected and adjacency in $G_m$  implies adjacency of the respective images in $H$ (the same clearly also holds for $G_n$ and $\mu_n$).  However, avoiding marked edges complicates the situation, as not every path in the grid necessarily has an analogue in $H$ without using marked edges, since a single hyperedge in $H$ can cover the connection between many vertices. %

  We first give an outline of the overall strategy of our construction. For any
  two adjacent blocks in $G_m$, we will fix one of the
  edges that connect the two blocks in $G_m$ and mark an edge in $H$ that covers the image of that edge. When we
  make a choice for horizontally adjacent blocks, we want to chose the
  edge in the $i$-th row of the block only if (the image under $\mu_m$ of) no  vertex in the
  $i$-th row of the block is contained in some marked edge (except for the marked edges that connect the block to the adjacent block).   For
  vertical connections our goal is same but for columns instead of
  rows. For block $B$, let $c(B) \subseteq E(H)$ be the edges marked in this way that connect  $\mu(B)$ to the image of its adjacent blocks. The high-level idea is illustrated in Figure~\ref{fig:ex.fullblock} with edges corresponding to an edge in $c(B)$ marked in red and block groupings marked in black.

  Since all pairs of rows and columns in a block intersect we then get that for any block $B$, there is a path between each pair of edges in $c(B)$ using only vertices that are in no marked edges (see Figure~\ref{fig:ex.block12}. Hence, there will also be a path in $H$ using no marked edges and only vertices from inside the block $B$, i.e., from the image $\mu_n(v)$ for some $v \in V(G_n)$.

  \begin{claim}
    \label{claim:blockedges}
    Let $B$, $B'$ be two adjacent blocks in $G_m$. Then there are at least $4a^4n^4$ distinct edges in $E(H)$ that touch both $\mu_m(B)$ and $\mu_m(B')$.
  \end{claim}
  \begin{claimproof}
    For any two adjacent blocks $B, B'$ in $G_m$, there are
    $4a^5n^4$ edges in $G_m$ that touch both blocks.  Thus, also the Gaifman graph of $H$ has at least that many edges that touch $\mu_m(B)$ and $\mu_m(B')$.
    Every edge of
    $H$ contains at most $a$ vertices and therefore also at most $a$ elements of $\mu_m(V(G_m))$. %
    That is, a single edge in $E(H)$ can only touch the image of at most $a$ vertices on the ``boundary'' (the rows and columns with the lowest and highest index in the block) of every block, i.e., those vertices that are adjacent to another block. 
    It follows that there are at least $4a^4n^4$ distinct edges in $E(H)$ that touch both
    $\mu_m(B)$ and $\mu_m(B')$. That
    is, for each edge in $G_n$ there are at least $4a^4n^4$ possible
    distinct choices to map to in $\rho$ that satisfy the second condition of Definition~\ref{def:expr}. %
  \end{claimproof}

  \bigskip

  In the following we will restrict the columns in blocks from which we choose ``vertical'' edges that connect to an adjacent block ``below'' or ``above'', and the rows from which we choose ``horizontal'' edges that connect to adjacent blocks to the ``left'' or ``right''. To avoid the cumbersome distinction we always refer only to the \emph{line of an edge in a block}, with the understanding that for ``vertical'' edges this refers to the column and for ``horizontal'' edges this refers to the row.

  We now argue that the $4a^4n^4$ possible choices per edge in $G_n$ are enough to establish our $\rho$ as intended.
  We give a procedure that constructs such a $\rho$. Fix some ordering $O = (e_1,e_2,\dots,e_\ell)$ on the edges of $G_n$. In a first phase, in order of $O$, for $e_i = \{v,u\}$ fix a tuple $\beta_i = (F_i, \rho_i)$, where $F_i \subseteq E(G_m)$ is a set of $2n^2a$ edges that touch $B_v$ and $B_u$, and $\rho_i \colon F_i \to E(H)$ is an injective map such that for every $f \in F_i$ it holds that $\mu_m(f) \subseteq \rho_i(f)$. Let $R(\beta_i) = \{w\in V(G_m) \mid \exists f\in F_i. \mu_m(w) \cap \rho_i(f) \neq \emptyset \}$ and note that $|V(\beta_i)| \leq 2n^2a^2$ since there are at most that many vertices in the fixed edges of $H$ and no vertex is in two images of $\mu_m$. Additionally, we require the following condition on $F_i$: for every $f\in F_i$,
  for the set of vertices $V_{v,f}$ in the line of $f$ in $B_v$, it holds that $V_{v,f} \cap \bigcup_{j < i} R(\beta_j) = \emptyset$ (and analogously for $B_u$). That is, we only fix edges that lie on rows and columns that are not touched (in the image) by any edge of $H$ that was fixed in a previous step. Note that this condition also implies that $\rho_i$ can not map to any hyperedge in the image of a previous $\rho_j$ for $j<i$.
  Let us refer to such a $\beta_i$ as a \emph{bundle} for the $i$-th edge in $O$.

  We first argue that such a sequence of bundles always exists before moving on to the second phase of the procedure.
  At step $i$ in the procedure, $2(i-1)\ n^2a$  distinct hyperedges have been fixed already, and thus $\bigcup_{j<i} R(\beta_j)$ contains at most $2(i-1)\ n^2a^2$ vertices of $G_m$. Hence, at most that many rows and columns are ``blocked'' (twice the number of vertices since a vertex occurs in a row and a column of each block) by previous choices and $4a^5n^4 - 4(i-1) n^2a^2$ edges that connect the two blocks of step $i$ in $G_m$ are feasible and thus also at least $4a^4n^4 - 4(i-1) n^2a^2$ edges in $E(H)$ to fix in $\rho_i$.  Since $\ell < n^2$, this leaves enough feasible choices for $\beta_i$ and $\rho_i$ to fix in every step of the first phase.
  By Claim~\ref{claim:blockedges} there are 

  This first phase establishes a possible list of choices for the mapping $\rho$ for every edge in $G_n$ with some limited guarantees on the paths between them.
  What is left, is to now filter down the bundles to
  single edges in a way that satisfies the third condition of Definition~\ref{def:expr}. This can
  be achieved by repeating the basic idea of the first phase in reverse direction of $O$, but with choices restricted to the already fixed bundles. Note that in step $i$ of the reverse order corresponds to step $\ell-i+1$ in the forward iteration over $O$ in the first phase.
  We iterate in reverse order of $O$. In step $i$, we select a single $f'_i \in F_{\ell-i+1}$ and fix $\rho(e_{\ell-i+1}) = \rho_{\ell-i+1}(f'_i)$. Let $R_i = \{w \in V(G_m) \mid \mu_m(w) \in \rho_{\ell-i+1}(f'_i)\}$,
  the choice of $f'_i$ (incident to blocks $B_v, B_u$) shall satisfy the following property: 
  for the set of vertices $V_{v,f'_i}$ in the line of $f'_i$ in $B_v$, it holds that $V_{v,f'_i} \cap \bigcup_{j > \ell-i+1} R_{j} = \emptyset$. That is, $f'_i$ is chosen such that no vertex in the same line on the block has its image (w.r.t. $\mu_m$) in an edge of $H$ fixed in a previous step of the second phase.

  In this phase, after step $i$ we have fixed $i$ choices and thus $\bigcup_{j < i} R_{\ell-j+1}$ contains at most $ia$ vertices, leaving at least $2n^2a - 2ia$ feasible choices. As before we see that with $\ell < n^2$ steps in total, there is at least one feasible choice at every step of the second phase. Furthermore, in step $i$, it is easy to observe that there is exactly one edge $e$ in $G_n$ such that $\rho(e_n)$ intersects $V_{v,f'_i}$, namely $e = e_{\ell-i+1}$. By the first phase, we have that for every $0 \leq j < \ell-i+1$ the line $V_{v,f'_i}$ contains no vertices that map into an edge in the image of $\rho_j$ and thus particularly not into $\rho(e_j)$. From the condition in the second phase, it also holds that no vertex in $V_{v,f'_i}$ maps into a vertex in any $\rho(e_j)$  for $j > \ell-i+1$. As argued above, this then implies that $\rho$ satisfies the third condition of Definition~\ref{def:expr}. 
  Furthermore, the resulting $\rho$ is clearly injective and for every $e = \{u,v\}$ in $G_n$, $\rho(e)$ touches both $B_u$ and $B_v$. Hence, $\mu_n$ is an expressive minor from the $\nxn$-grid $G_n$ into $H$.
\end{proof}

\begin{theorem}
  \label{thm:excexpr}
  There exists a function $f \colon \mathbb{N}^2 \to \mathbb{N} $ with the following property. For every
  $n\geq 1$, every bounded rank hypergraph $H$ with $\tw(H) > f(n, \rank(H))$ contains the $\nxn$-grid as an expressive minor.
\end{theorem}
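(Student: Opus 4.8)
The plan is to derive Theorem~\ref{thm:excexpr} directly from Lemma~\ref{lem:expminorexcl} and the Excluded Grid Theorem (Theorem~\ref{thm:excgrid}). Lemma~\ref{lem:expminorexcl} already contains all the real content, so what remains is bookkeeping about treewidth and the grid dimension.

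First I would record that the treewidth of a hypergraph $H$ equals the treewidth of its primal graph $G$: this is immediate from the definition of $\tw$ as the $w$-width with $w(B)=|B|-1$, since every hyperedge induces a clique in $G$ and every clique is contained in a single bag of any tree decomposition. Second, I would reduce to the case that $H$ is reduced. Passing to a reduced hypergraph $H_0$ for $H$ only deletes isolated vertices and empty edges (which leaves $G$, and hence $\tw$, unchanged up to isolated vertices) and identifies vertices of equal vertex type; since in a rank-$a$ hypergraph every vertex-type class has at most $a$ members (all members of a class lie together in every edge of that class), reversing these identifications inflates each bag of a tree decomposition of $H_0$ by a factor of at most $a$, so $\tw(H)\le a\,(\tw(H_0)+1)-1$, while $\rank(H_0)\le a$ and every edge of $H_0$ is a restriction of an edge of $H$. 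In particular an expressive $\nxn$-grid minor of $H_0$ is also one of $H$, so it suffices to prove the statement for $H_0$.

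The main argument is then short. Put $a:=\rank(H)$, let $r$ be the function from Theorem~\ref{thm:excgrid}, and define $f(n,a):=a\big(r(4a^{6}n^{5})+1\big)$. Assume $\tw(H)>f(n,a)$; then by the inequality above $\tw(H_0)>r(4a^{6}n^{5})$. Applying Theorem~\ref{thm:excgrid} with parameter $4a^{6}n^{5}$, the primal graph of $H_0$ has a $(4a^{6}n^{5})\times(4a^{6}n^{5})$-grid as a minor, hence also the $m\times m$-grid as a minor for $m:=4\,\rank(H_0)^{6}n^{5}\le 4a^{6}n^{5}$. Lemma~\ref{lem:expminorexcl} now yields that the $\nxn$-grid is an expressive minor of $H_0$, and therefore of $H$.

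There is no genuine obstacle here — the statement is a corollary. The only points that need a moment's attention are (i) that reducing $H$ costs only a factor $\rank(H)$ in treewidth, which is why $f$ carries the extra $a$ factor, and (ii) making the grid dimensions line up with the $4\rank^{6}n^{5}$ blow-up in Lemma~\ref{lem:expminorexcl}. If one is content to state Theorem~\ref{thm:excexpr} only for reduced hypergraphs — which is all that is needed in its application to the dual hypergraph in the proof of Theorem~\ref{thm:prej} — then the reduction step can be dropped entirely and $f(n,a)=r(4a^{6}n^{5})$ works verbatim.
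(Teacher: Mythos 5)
Your proof is correct and takes essentially the same route as the paper, which obtains Theorem~\ref{thm:excexpr} directly from Lemma~\ref{lem:expminorexcl} together with the Excluded Grid Theorem (the paper states only that the result ``follows immediately'' from these two). Your extra bookkeeping --- identifying hypergraph treewidth with primal-graph treewidth, and passing to a reduced hypergraph at a cost of a factor $\rank(H)$ in treewidth (so that Lemma~\ref{lem:expminorexcl}, which is stated for reduced hypergraphs, applies) --- correctly fills in details the paper leaves implicit.
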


\begin{lemma}
  \label{lem:exp2qj}
  Let $H$ be a reduced hypergraph such that the $\nxn$-grid is an expressive minor of $H^d$.
  Then there exists an $\nxn$-pre-jigsaw $H'$ such that $H$ dilutes to $H'$.
\end{lemma}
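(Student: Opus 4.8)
`\textbf{Proof plan for Lemma~\ref{lem:exp2qj}.}` The plan is to translate an expressive minor map from the $\nxn$-grid into $H^d$ directly into the data required by Definition~\ref{def:prej}, and then show that the associated substructure of $H$, after some cleanup by vertex deletions, is exactly an $\nxn$-pre-jigsaw. Recall that $V(H^d) = E(H)$ and $E(H^d) = \{I_v \mid v \in V(H)\}$, so throughout I will silently pass between ``vertex of $H^d$'' and ``edge of $H$'', and between ``edge of $H^d$'' and ``vertex of $H$''. Let $G_n$ be the $\nxn$-grid and $J_n = G_n^d$ the $\nxn$-jigsaw; note $E(J_n) = V(G_n)$ and $V(J_n) = E(G_n)$.

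First I would set up the correspondence. Let $\mu \colon V(G_n) \to 2^{V(H^d)}$ be the expressive minor map with the injective edge map $\rho \colon E(G_n) \to E(H^d)$ from Definition~\ref{def:expr}. For each vertex $a$ of $G_n$ (which is an edge $e_a$ of $J_n$), $\mu(a)$ is a connected set of vertices of $H^d$, i.e.\ a connected set of edges of $H$; set $o(e_a) := \mu(a) \subseteq E(H)$. Disjointness of the sets $\mu(a)$ gives condition~(1) of Definition~\ref{def:prej} immediately, and since $\mu$ is onto, every edge of $H$ lies in some $o(e_a)$, giving condition~(2). For the vertices of $J_n$: a vertex of $J_n$ is an edge $\{a,b\}$ of $G_n$, and $\rho(\{a,b\})$ is an edge of $H^d$, i.e.\ a vertex $v_{a,b} \in V(H)$; set $\pi(\{a,b\}) := v_{a,b}$. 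Injectivity of $\rho$ makes $\pi$ injective. Condition~(2) of Definition~\ref{def:expr} says $\rho(\{a,b\})$ connects $\mu(a)$ and $\mu(b)$ in $H^d$, which means: as a vertex of $H$, $v_{a,b}$ is incident both to some edge in $o(e_a)$ and to some edge in $o(e_b)$ — this is precisely what it means for $v_{a,b}$ to sit ``between'' the two jigsaw edges $e_a, e_b$.

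Next, the heart of the argument: condition~(3) of Definition~\ref{def:prej}, namely that for two vertices $u,u'$ of $J_n$ lying in a common edge $e_a$ of $J_n$, there is a path in $H$ from $\pi(u)$ to $\pi(u')$ using only edges in $o(e_a)$ and no $\pi$-image as an internal vertex. Two vertices $u = \{a,b\}$, $u' = \{a,b'\}$ of $J_n$ lying in $e_a$ correspond to two edges of $G_n$ incident at the vertex $a$. By condition~(3) of Definition~\ref{def:expr} there is a path in $H^d$ from $\rho(u)$ to $\rho(u')$ avoiding all of $\rho(E(G_n))$ except its endpoints. I would argue that, because $\mu(a)$ is connected in $H^d$ and both $\rho(u), \rho(u')$ touch $\mu(a)$, one can in fact route such a path through $\mu(a)$ alone — i.e.\ within the edges $o(e_a)$ of $H$ — and that a path in $H^d$ corresponds, by the primal/dual translation, to an alternating vertex--edge walk in $H$, which is a path in $H$ once shortcuts are removed. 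The condition that the path uses no $\pi$-image internally is exactly the ``avoids $\rho(E(G_n))$ except endpoints'' clause: a $\pi$-image is a $\rho$-image-edge of $H^d$, and these are forbidden as internal vertices of the $H^d$-path. Here one must be a little careful that the path stays inside $\mu(a)$ rather than wandering into neighbouring $\mu(a')$; I would use connectedness of $\mu(a)$ together with the fact that leaving $\mu(a)$ requires crossing an edge of $H^d$ incident to some $\mu(a')$ — I expect this to be where the expressive-minor definition has to be squeezed for exactly the right consequence, and it is the main obstacle. After fixing one such path $P_{u,u'}$ per pair, I would delete from $H$ all vertices that are neither in the image of $\pi$ nor on any fixed path, obtaining a dilution $P$ of $H$ (deletion of vertices is a dilution operation) that satisfies the final ``furthermore'' clause of Definition~\ref{def:prej}; deleting subedges created by this vertex deletion, if any, is also a legal dilution step. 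The resulting $P$ is then an $\nxn$-pre-jigsaw and $H$ dilutes to it, which is the claim.
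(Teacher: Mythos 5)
Your proposal follows essentially the same route as the paper's proof: dualise $\mu$ and $\rho$ into $o$ and $\pi$, read off conditions (1) and (2) of Definition~\ref{def:prej} from disjointness and surjectivity of $\mu$, obtain the paths of condition (3) from condition (3) of Definition~\ref{def:expr} under the primal/dual translation, and finish by deleting every vertex of $H$ that is neither a $\pi$-image nor on a fixed path. The one step you flag as the main obstacle --- forcing the connecting path to use only edges of $o(e_a)$ while still avoiding the marked edges $\rho(E(G_n))$ --- is precisely the step the paper also leaves to the reader (``it is then not difficult to verify the other properties \dots as they correspond to dualised properties of expressive minors''), so you have correctly isolated the only genuinely delicate point rather than missed anything the paper supplies.
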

\begin{proof}
  First, we assume w.l.o.g. that $H$ has no isolated vertices, no empty
  edges, and no duplicate vertex types. It is straightforward to
  observe that such vertices and edges can not contribute to the
  minor mapping of $G$ into $H^d$ in any meaningful way. Note that these
  assumptions are only made in the argument for sake of simplicity and
  the statement of the lemma still holds in full generality since the
  assumed properties can always be enforced through a simple
  dilution sequence (cf. Lemma~\ref{lem:redhyper}).

  Let $G$ be the $\nxn$-grid and let $J$ be the corresponding $\nxn$-jigsaw such that $J = G^d$.
  Let $\mu  \colon V(G) \to 2^{V(H^d)}$ be an expressive minor
  map from $G$ onto $H^d$.
  Let $\rho \colon E(G) \to E(H^d)$ be the mapping as in
  Definition~\ref{def:expr}.

  We will now consider the mappings $\mu$ and $\rho$ from the perspective of $G^d$ and $H$ instead.
  That is, let $\pi \colon V(G^d)\to V(H)$
  such that $\pi(x) = \rho(x)$, and analogously let $o$ be the respective
  dualisation $E(G^d) \to 2^{E(H)}$ of $\mu$.  Alternatively, we see
  that $\pi$ is a mapping $V(J) \to V(H)$, and $o$ is of the form
  $E(J) \to 2^{E(H)}$.

  Since $\mu$ is a minor, we have that for every two distinct vertices $v,u \in V(G)$, $\mu(v) \cap \mu(u) = \emptyset$, and thus also for any two distinct $e,f \in E(J)$ that $o(e) \cap o(f) = \emptyset$. Since $\mu$ is onto, also every edge in $H$ is in some image of $o(e)$ for $e \in E(J)$ (since every vertex of $H^d$ is in some $\mu(v)$ for $v \in V(G)$.

  Finally, for any two vertices $u,v$ in edge $e$ of $J$, there are edges $f_v, f_u$ with a common vertex $w_e$ in $G$. Since $\mu$ is an expressive minor (witnessed by $\rho$), there is a path from $\rho(f_v)$ to $\rho(f_u)$ using only vertices in $\mu(w_e)$ and no other edges in the image of $\rho$. As noted before, it is easy to see that there also exists a path from $\pi(v)$ to $\pi(u)$ using only edges in $o(e)$ and no vertices in the image of $\pi$. Let $P_{u,v}$ be such a path.

  That is, $H$ already satisfies the first three properties of a pre-jigsaw. We show that the fourth property can always be enforced via dilution while preserving the other three.
  In particular, let $C$ be the set of all vertices that occur in some path $P_{u,v}$ for adjacent $u,v \in V(J)$ as fixed above. Recall that the paths are in $H$, and thus $C \subseteq V(H)$.
  
  To obtain $H'$ we delete all vertices that are neither in $C$ nor $\pi(V(J))$, as well as any empty edges that are created in the process.
  To adapt $o$ to edges of $H'$ we take
  \[
    o' \colon e \mapsto \{ f \cap V(P) \mid f \in o(e)\} \setminus \emptyset
  \]
  It is straightforward to verify that any $o'(e)$ is still connected by
  the paths connecting the pairs of vertices in $e$ since the each path $P_{u,v}$ is still a path in $H'$. The first two properties of pre-jigsaws are inherited directly from $H$.
\end{proof}

\begin{proof}[Proof of Theorem~\ref{thm:prej}]
  Let $H$ be a hypergraph with degree $d$. Let $f_d$ be the function from Theorem~\ref{thm:excexpr} with the second parameter fixed to $d$. Note that $H^d$ has rank $d$ and thus by~Theorem~\ref{thm:excexpr} contains a $\nxn$-grid minor. By Lemma~\ref{lem:exp2qj} $H$ then also dilutes to a pre-jigsaw.
\end{proof}

 \end{document}